\tikzset{state/.style={shape=circle, draw, initial text=, inner
    sep=.5mm, minimum size=5mm}}
\tikzset{state with output/.style={shape=circle split, draw, initial
    text=, inner sep=.9mm, minimum size=15mm}}
\newcommand*\ttto[2]{\ensuremath{\mathrel{\smash{%
      \xrightarrow[%
      {\raisebox{.5ex}[0pt][0pt]{%
          \hspace*{.3em}$\scriptstyle#2$\hspace*{.3em}}}%
      ]{#1}}}}}
\newcommand*\Nat{\ensuremath{\mathbbm{N}}}
\newcommand*\Real{\ensuremath{\mathbbm{R}}}
\newcommand*\Realnn{\ensuremath{\Real_{\ge0}}}
\newcommand*\Realnp{\ensuremath{\Real_{\le0}}}
\newcommand*\Realni{\ensuremath{[ 0, \infty]}}
\newcommand*\bigmid{\mathrel{\big|}}
\newcommand*\ie{\textit{i.e.,}\xspace}
\newcommand*\ltrue{\textup{\textbf{t\!t}}}
\newcommand*\lfalse{\textup{\textbf{ff}}}
\newlength{\myautolength}
\newlength{\mysecondautolength}
\newlength{\mythirdautolength}
\newlength{\minautolength}
\newcommand*{\MaxLength}[3]{%
\ifthenelse{\lengthtest{\the#1>#2}}
           {\setlength{#3}{#1}}
           {\setlength{#3}{#2}}}
\newcommand*\acceptingornot{}
\newcommand{\simpleauto}[4][0]{%   draws an automaton with one location,
                            %   arguments are rate, price, bound
%
%\ensuremath
\ifthenelse{\equal{#1}{}}{\renewcommand*\acceptingornot{accepting}}{\renewcommand*\acceptingornot{}}
\settowidth{\myautolength}{$#4$}
\addtolength{\myautolength}{1em}
\MaxLength{\myautolength}{\minautolength}{\myautolength}
\begin{tikzpicture}[baseline, shorten >=1pt]
%, node distance=6em, on grid, auto]
  \node[state, initial]                      (1)      {$#2$};
  \node[state, \acceptingornot, right=\myautolength of 1] (end)    {$#1$};
  \path[->] (1)  edge    node[above] {$#3$} node[below] {$#4$} (end);
\end{tikzpicture}
}
\newcommand{\doubleauto}[7][0]{%   draws an automaton with two locations,
                            %   arguments are rate, price, bound
%
%\ensuremath
\ifthenelse{\equal{#1}{}}{\renewcommand*\acceptingornot{accepting}}{\renewcommand*\acceptingornot{}}
\settowidth{\myautolength}{$#4$}
\addtolength{\myautolength}{1em}
\MaxLength{\myautolength}{\minautolength}{\myautolength}
\settowidth{\mysecondautolength}{$#7$}
\addtolength{\mysecondautolength}{1em}
\MaxLength{\mysecondautolength}{\minautolength}{\mysecondautolength}
\begin{tikzpicture}[baseline, shorten >=1pt]
  \node[state, initial]            (1)                          {$#2$};
  \node[state, right=\myautolength of 1]            (2)           {$#5$};
  \node[state, \acceptingornot, right=\mysecondautolength of 2] (end) {$#1$};
  \path[->] (1)  edge    node[above] {$#3$} node[below] {$#4$} (2);
  \path[->] (2)  edge    node[above] {$#6$} node[below] {$#7$} (end);
\end{tikzpicture}
}
\newcommand{\tripleauto}[9]{%   draws an automaton with three locations,
                            %   arguments are rate, price, bound
%
%\ensuremath
\settowidth{\myautolength}{$#3$}
\addtolength{\myautolength}{1em}
\MaxLength{\myautolength}{\minautolength}{\myautolength}
\settowidth{\mysecondautolength}{$#6$}
\addtolength{\mysecondautolength}{1em}
\MaxLength{\mysecondautolength}{\minautolength}{\mysecondautolength}
\settowidth{\mythirdautolength}{$#9$}
\addtolength{\mythirdautolength}{1em}
\MaxLength{\mythirdautolength}{\minautolength}{\mythirdautolength}
\begin{tikzpicture}[baseline, shorten >=1pt]
  \node[state, initial]            (1)                          {$#1$};
  \node[state, right=\myautolength of 1]            (2)           {$#4$};
  \node[state, right=\mysecondautolength of 2]            (3)           {$#7$};
  \node[state, accepting, right=\mythirdautolength of 3] (end) {};
  \path[->] (1)  edge    node[above] {$#2$} node[below] {$#3$} (2);
  \path[->] (2)  edge    node[above] {$#5$} node[below] {$#6$} (3);
  \path[->] (3)  edge    node[above] {$#8$} node[below] {$#9$} (end);
\end{tikzpicture}
}
\newcommand*\F{\mathcal{F}}
\newcommand*\G{\mathcal{G}}
\newcommand*\A{\mathcal{A}}
\newcommand*\E{\mathcal{E}}
\newcommand*\V{\mathcal{V}}
\newcommand*\compop{\mathrel{\mathop{\triangleright}}} % function composition
\newcommand*\comp[3]{#1\compop#2(#3)}
\newcommand*\U{\mathcal{U}}
\newcommand*\one{\mathbf{1}}
\newcommand*\Bool{\mathbbm{B}}
\newcommand*\Ax[1]{$(\textrm{\textup{C}}#1)$}
\newdimen\LineSpace%
\tikzset{
    line space/.code={\LineSpace=#1},
    line space=3pt
}
\begin{document}

\title{An \texorpdfstring{$\boldsymbol\omega$}{Omega}-Algebra for Real-Time Energy Problems}

\author[D.~Cachera]{David Cachera\rsuper{a}}%
\address{\lsuper{a}Universit{\'e} de Rennes, Inria, CNRS, IRISA, France}

\author[U.~Fahrenberg]{Uli Fahrenberg\rsuper{b}}%
\address{\lsuper{b}{\'E}cole Polytechnique, Palaiseau, France}%
\thanks{Most of this work was completed while the second author was still
  employed at Irisa / Inria Rennes.}

\author[A.~Legay]{Axel Legay\rsuper{c}}%
\address{\lsuper{c}Universit{\'e} catholique de Louvain, Belgium}

\titlecomment{This is a revised and extended version of the
  paper~\cite{DBLP:conf/fsttcs/CacheraFL15} which has been presented
  at the 35th IARCS Annual Conference on Foundation of Software
  Technology and Theoretical Computer Science (FSTTCS 2015) in
  Bangalore, India.  Compared to~\cite{DBLP:conf/fsttcs/CacheraFL15},
  and in addition to numerous small changes and improvements,
  motivation and examples as well as proofs of all results have been
  added to the paper.}

\begin{abstract}
  We develop a $^*$-continuous Kleene $\omega$-algebra of real-time
  energy functions.  Together with corresponding automata, these can
  be used to model systems which can consume and regain energy (or
  other types of resources) depending on available time.  Using recent
  results on $^*$-continuous Kleene $\omega$-algebras and
  computability of certain manipulations on real-time energy
  functions, it follows that reachability and B{\"u}chi acceptance in
  real-time energy automata can be decided in a static way which only
  involves manipulations of real-time energy functions.
\end{abstract}

\maketitle

\section{Introduction}

\emph{Energy} and \emph{resource management} problems are important in
areas such as embedded systems or autonomous systems.  They are
concerned with the following types of questions:
\begin{itemize}
\item \textit{Can the system reach a designated state without running
    out of energy before?}
\item \textit{Can the system reach a designated state within a
    specified time limit without running out of energy?}
\item \textit{Can the system repeatedly accomplish certain designated
    tasks without ever running out of energy?}
\end{itemize}
Instead of energy, these questions can also be asked using other
resources, for example money or fuel.

\begin{figure}
  \centering
  \includegraphics[width=.7\linewidth, trim=0 60 0 0,
  clip]{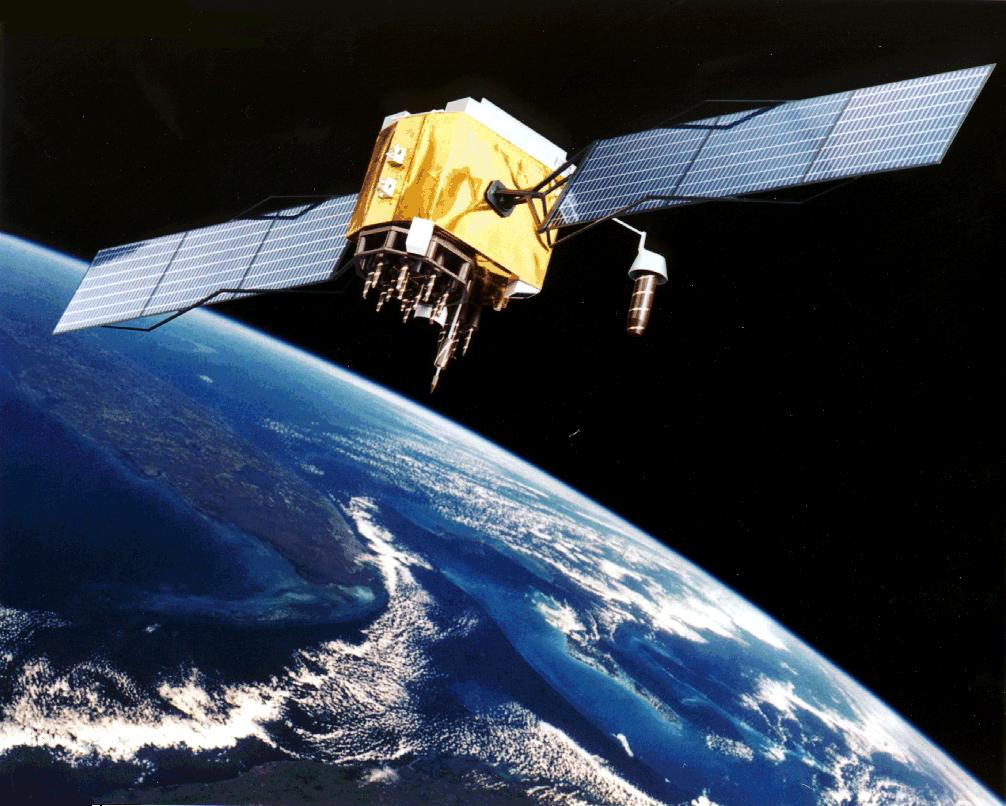}
  \caption{GPS Block II-F satellite (artist's conception; public domain)}\label{fi:sat}
\end{figure}

As an example, imagine a satellite like in Fig.~\ref{fi:sat} which is
being launched into space.  In its initial state when it has arrived at
its orbit, its solar panels are still folded, hence no (electrical)
energy is generated.  Now it needs to unfold its solar panels and
rotate itself and its panels into a position orthogonal to the sun's
rays (for maximum energy yield).  These operations require energy
which hence must be provided by a battery, and there may be some
operational requirements which state that they have to be completed
within a given time limit.  To minimize weight, one will generally be
interested to use a battery with minimal possible capacity.

\begin{figure}[bp]
  \centering
  \begin{tikzpicture}[shorten >=1pt, auto, ->, xscale=3, yscale=2.5]
    \node[state with output, initial] (00) at (0,0)
    {closed\vphantom{p} \nodepart{lower} $0$};
    \node[state with output] (10) at (1,0) {half\vphantom{p}
      \nodepart{lower} $2$};
    \node[state with output] (20) at (2,0) {open \nodepart{lower}
      $5$};
    \node[state with output] (01) at (0,-1) {closed\vphantom{p}
      \nodepart{lower} $0$};
    \node[state with output] (11) at (1,-1) {half\vphantom{p}
      \nodepart{lower} $4$};
    \node[state, accepting, align=center, minimum size=15mm] (21) at
    (2,-1) {opera- \\ tional};
    \path (00) edge node[below] {open} node[above] {$-20$} (10);
    \path (10) edge node[below] {open} node[above] {$-20$} (20);
    \path (01) edge node[below] {open} node[above] {$-20$} (11);
    \path (11) edge node[below] {open} node[above] {$-20$} (21);
    \path (00) edge node[left] {rotate} node[right] {$-10$} (01);
    \path (10) edge node[left] {rotate} node[right] {$-10$} (11);
    \path (20) edge node[left] {rotate} node[right] {$-10$} (21);
  \end{tikzpicture}
  \caption{Toy model of the satellite in Fig.~\ref{fi:sat}}\label{fi:sat-model}
\end{figure}
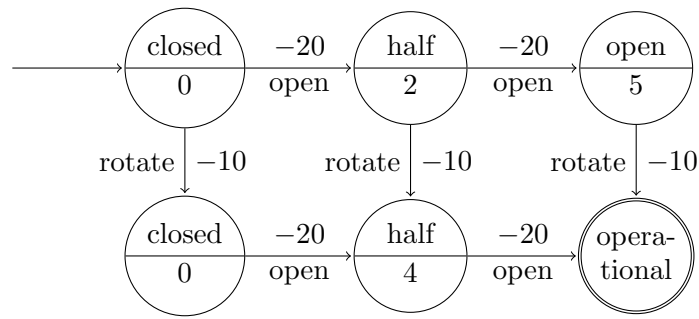

Figure~\ref{fi:sat-model} shows a simple toy model of such a
satellite's initial operations.  We assume that it opens its solar
panels in two steps; after the first step they are half open and
afterwards fully open, and that it can rotate into orthogonal position
at any time.  The numbers within the states signify energy gain per
time unit, so that for example in the half-open state, the satellite
gains $2$ energy units per time unit before rotation and $4$ after
rotation.  The (negative) numbers at transitions signify the energy
cost for taking that transition, hence it costs $20$ energy units to
open the solar panels and $10$ to rotate.

Now if the satellite battery has sufficient energy, then we can follow
any path from the initial to the final state without spending time in
intermediate states.  A simple inspection reveals that a battery level
of $50$ energy units is required for this.  On the other hand, if
battery level is strictly below $20$, then no path is available to the
final state.  With initial energy level between these two values, the
device has to regain energy by spending time in an intermediate state
before proceeding to the next one.  The optimal path then depends on
the available time and the initial energy.  For an initial energy
level of at least $40$, the fastest strategy consists in first opening
the panels and then spending $2$ time units in state (open|$5$) to
regain enough energy to reach the final state.  With the smallest
possible battery, storing $20$ energy units, $5$ time units have to be
spent in state (half|$2$) before passing to (half|$4$) and
spending another $5$ time units there.

In this paper we will be concerned with models for such systems which,
as in the example, allow to spend time in states to regain energy, of
which some has to be spent when taking transitions between states.
(Instead of energy, other resource types could be modeled, but we will
from now think of it as energy.)  We call these models \emph{real-time
  energy automata}.  Their behavior depends, thus, on both the initial
energy and the time available; as we have seen in the example, this
interplay between time and energy means that even simple models can
have rather complicated behaviors.  As in the example, we will be
concerned with the \emph{reachability} problem for such models, but
also with \emph{B{\"u}chi acceptance}: whether there exists an
infinite run which visits certain designated states infinitely often.

Our methodology is strictly algebraic, using the theory of
semiring-weighted auto\-mata~\cite{book/DrosteKV09} and extensions
developed in~\cite{conf/dlt/EsikFL15, DBLP:journals/corr/EsikFL15a}.
We view the finite behavior of a real-time energy automaton as a
function $f(x_0,t)$ which maps initial energy $x_0$ and available time
$t$ to a final energy level, intuitively corresponding to the highest
output energy the system can achieve when run with these parameters.
We define a composition operator on such \emph{real-time energy
  functions} which corresponds to concatenation of real-time energy
automata and show that with this composition and maximum as operators,
the set of real-time energy functions forms a \emph{$^*$-continuous
  Kleene algebra}~\cite{DBLP:journals/iandc/Kozen94}.  This implies
that reachability in real-time energy automata can be decided in a
static way which only involves manipulations of real-time energy
functions.

To be able to decide B{\"u}chi acceptance, we extend the algebraic
setting to also encompass real-time energy functions which model
infinite behavior.  These take as input an initial energy $x_0$ and
time $t$, as before, but now the output is Boolean: true if these
parameters permit an infinite run, false if they do not.  We show that
both types of real-time energy functions can be organized into a
\emph{$^*$-continuous Kleene $\omega$-algebra} as defined
in~\cite{conf/dlt/EsikFL15, DBLP:journals/corr/EsikFL15a}.  This
entails that also B{\"u}chi acceptance for real-time energy automata
can be decided in a static way which only involves manipulations of
real-time energy functions.

The most technically demanding part of the paper is to show that
real-time energy functions form a \emph{locally closed
  semiring}~\cite{book/DrosteKV09, journals/mfm/EsikK02}; generalizing
some arguments in~\cite{journals/mfm/EsikK02,
  DBLP:journals/corr/EsikFL15a}, it then follows that they form a
$^*$-continuous Kleene $\omega$-algebra.  We conjecture that
reachability and B{\"u}chi acceptance in real-time energy automata can
be decided in exponential time.

\subsection*{Related work}

Real-time energy problems have been considered
in~\cite{DBLP:conf/lata/Quaas11, DBLP:conf/formats/BouyerFLMS08,
  DBLP:conf/hybrid/BouyerFLM10, DBLP:journals/pe/BouyerLM14,
  DBLP:conf/ictac/FahrenbergJLS11}.  These are generally defined on
\emph{priced timed automata}~\cite{DBLP:conf/hybrid/AlurTP01,
  DBLP:conf/hybrid/BehrmannFHLPRV01}, a formalism which is more
expressive than ours: it allows for time to be reset and admits
several independent time variables (or \emph{clocks}) which can be
constrained at transitions.  All known decidability results apply to
priced timed automata with only \emph{one} clock;
in~\cite{DBLP:journals/pe/BouyerLM14} it is shown that with four
clocks, it is undecidable whether there exists an infinite run.

The work which is closest to ours
is~\cite{DBLP:conf/hybrid/BouyerFLM10}.  Their models are priced timed
automata with one clock and energy updates on transitions, hence a
generalization of ours.  Using a sequence of complicated ad-hoc
reductions, they show that reachability and existence of infinite runs
is decidable for their models; whether their techniques apply to
general B{\"u}chi acceptance is unclear.

Our work is part of a program to make methods from semiring-weighted
automata available for energy problems.  Starting
with~\cite{DBLP:conf/atva/EsikFLQ13}, we have developed a general
theory of \mbox{$^*$-continuous} Kleene
$\omega$-algebras~\cite{conf/dlt/EsikFL15,
  DBLP:journals/corr/EsikFL15a, DBLP:journals/actaC/EsikFLQ17,
  DBLP:journals/actaC/EsikFLQ17a} and shown that it applies to
so-called \emph{energy automata}, which are finite (untimed) automata
which allow for rather general \emph{energy transformations} as
transition updates.  The contribution of this paper is to show that
these algebraic techniques can be applied to a real-time setting.

Note that the application of Kleene algebra to real-time and hybrid
systems is not a new subject, see for
example~\cite{DBLP:journals/jlp/HofnerM09,
  DBLP:conf/RelMiCS/DongolHMS12}.  However, the work in these papers
is based on \emph{trajectories} and \emph{interval predicates},
respectively, whereas our work is on real-time energy \emph{automata},
\ie~at a different level.  A more thorough comparison of our work
to~\cite{DBLP:journals/jlp/HofnerM09, DBLP:conf/RelMiCS/DongolHMS12}
would be interesting future work.

\subsection*{Acknowledgment}

We are deeply indebted to our colleague and friend Zolt{\'a}n {\'E}sik
who taught us all we know about Kleene algebras and $^*$-continuity.
This work was started during a visit of Zolt{\'a}n at Irisa in Rennes;
unfortunately, Zolt{\'a}n did not live to see it completed.

\section{Real-Time Energy Automata}%
\label{se:rtea}

Let $\Realnn=[ 0, \infty\mathclose[$ denote the set of non-negative
real numbers, $\Realni$ the set $\Realnn$ extended with infinity, and
$\Realnp=\mathopen] -\infty, 0]$ the set of non-positive real numbers.

\begin{defi}
  A \emph{real-time energy automaton} (RTEA) $( S, s_0, F, T, r)$
  consists of a finite set $S$ of \emph{states}, with \emph{initial
    state} $s_0\in S$, a subset $F\subseteq S$ of \emph{accepting}
  states, a finite set
  $T\subseteq S\times \Realnp\times \Realnn\times S$ of
  \emph{transitions}, and a mapping $r: S\to \Realnn$ assigning
  \emph{rates} to states.  A transition $( s, p, b, s')$ is written
  $s\ttto p b s'$, $p$ is called its \emph{price} and $b$ its
  \emph{bound}.  We assume $b\ge -p$ for all transitions
  $s\ttto p b s'$.
\end{defi}

An RTEA is \emph{computable} if all its rates, prices and bounds are
computable real numbers.

A \emph{configuration} of an RTEA $A=( S, s_0, F, T, r)$ is an element
$( s, x, t)\in C= S\times[ 0, \infty]\times[ 0, \infty]$.  Let
$\mathord\leadsto\subseteq C\times C$ be the relation given by
$( s, x, t)\leadsto( s', x', t')$ iff $t'\le t$ and there is a
transition $s\ttto p b s'$ such that $x+( t- t') r( s)\ge b$ and
$x'= x+( t- t') r( s)+ p$.  Hence $t- t'$ time units are spent in
state $s$ and afterwards a transition $s\ttto p b s'$ is taken.

A \emph{run} in $A$ is a path in the infinite directed graph
$( C, \mathord\leadsto)$, \ie~a finite or infinite sequence
$( s_1, x_1, t_1)\leadsto( s_2, x_2, t_2)\leadsto\dotsm$.  We are
ready to state the decision problems for RTEAs with which we will be
concerned.  Let $A=( S, s_0, F, T, r)$ be a computable RTEA and
$x_0, t, y\in \Realni$ computable numbers.

\begin{prob}[State reachability]%
  \label{prb:reach}
  Does there exist a finite run
  $( s_0, x_0, t)\leadsto\dotsm\leadsto( s, x, t')$ in $A$ with
  $s\in F$?
\end{prob}

\begin{prob}[Coverability]%
  \label{prb:cover}
  Does there exist a finite run
  $( s_0, x_0, t)\leadsto\dotsm\leadsto( s, x, t')$ in $A$ with
  $s\in F$ and $x\ge y$?
\end{prob}

\begin{prob}[B{\"u}chi acceptance]%
  \label{prb:buchi}
  Does there exist $s\in F$ and an infinite run
  $( s_0, x_0, t)\leadsto( s_1, x_1, t_1)\leadsto\dotsm$ in $A$ in
  which $s_n= s$ for infinitely many $n\ge 0$?
\end{prob}

Note that the coverability problem only asks for the final energy
level $x$ to be \emph{above} $y$; as we are interested in
\emph{maximizing} energy, this is natural.  Also, state reachability
can be reduced to coverability by setting $y= 0$.  As the B{\"u}chi
acceptance problem asks for infinite runs, there is no notion of
output energy for this problem.

Asking the B{\"u}chi acceptance question for a \emph{finite} available
time $t< \infty$ amounts to finding (accepting) \emph{Zeno runs} in
the given RTEA, \ie~runs which make infinitely many transitions in
finite time.  Hence one will usually be interested in B{\"u}chi
acceptance only for an infinite time horizon.

On the other hand, for $t= \infty$, a positive answer to the state
reachability problem~\ref{prb:reach} will consist of a finite run
$( s_0, x_0, \infty)\leadsto\dotsm\leadsto( s, x, \infty)$.  Now as
one can delay indefinitely in the state $s\in F$, this yields an
infinite \emph{timed run} in the RTEA\@.  Per our definition of
$\leadsto$ however, such an infinite run will \emph{not} be a positive
answer to the B{\"u}chi acceptance problem.

\section{Weighted Automata over \texorpdfstring{$^*$}{Star}-Continuous Kleene
  \texorpdfstring{$\omega$}{Omega}-Algebras}

We now turn our attention to the algebraic setting of $^*$-continuous
Kleene algebras and related structures and review some results on
$^*$-continuous Kleene algebras and $^*$-continuous Kleene
$\omega$-algebras which we will need in the sequel.

\subsection{\texorpdfstring{$^*$}{Star}-Continuous Kleene Algebras}

An \emph{idempotent semiring}~\cite{book/Golan99}
$S=( S, \vee, \cdot, \bot, 1)$ consists of an idempotent commutative
monoid $( S, \vee, \bot)$ and a monoid $( S, \cdot, 1)$ such that the
distributive and zero laws
\begin{equation*}
  x( y\vee z)= x y\vee x z \qquad\qquad
  ( y\vee z) x= y x\vee z x \qquad\qquad
  \bot x= \bot= x \bot
\end{equation*}
hold for all $x, y, z\in S$. It follows that the product operation
distributes over all finite suprema.  Each idempotent semiring $S$ is
partially ordered by the relation $x\le y$ iff $x\vee y = y$, and then
sum and product preserve the partial order and $\bot$ is the least
element.

A \emph{Kleene algebra}~\cite{DBLP:journals/iandc/Kozen94} is an
idempotent semiring $S=( S, \vee, \cdot, \bot, 1)$ equipped with an
operation $^*: S\to S$ such that for all $x, y\in S$, $y x^*$ is the
least solution of the fixed point equation $z= z x\vee y$ and $x^* y$
is the least solution of the fixed point equation $z= x z\vee y$ with
respect to the order $\le$.

A \emph{$^*$-continuous Kleene
  algebra}~\cite{DBLP:journals/iandc/Kozen94} is a Kleene algebra
$S=( S, \vee, \cdot, ^*, \bot, 1)$ in which the infinite suprema
$\bigvee_{ n\ge 0} x^n$ exist for all $x\in S$,
$x^*= \bigvee_{ n\ge 0} x^n$ for every $x\in S$, and product preserves
such suprema: for all $x, y\in S$,
\begin{equation*}
  y \Big( \bigvee_{ n\ge 0} x^n \Big)= \bigvee_{ n\ge 0} y x^n\quad
  \text{and}\quad \Big( \bigvee_{ n\ge 0} x^n \Big) y= \bigvee_{ n\ge
    0} x^n y\,.
\end{equation*}

Examples of $^*$-continuous Kleene algebras include the set
$P( \Sigma^*)$ of languages over an alphabet $\Sigma$, with set union
as $\vee$ and concatenation as $\cdot$, and the set $P( A\times A)$ of
relations over a set $A$, with set union as $\vee$ and relation
composition as $\cdot$.  These are, in fact, \emph{continuous} Kleene
algebras in the sense that suprema $\bigvee X$ of arbitrary subsets
$X$ exist.

An important example of a $^*$-continuous Kleene algebra which is not
continuous is the set $R( \Sigma^*)$ of \emph{regular} languages over
an alphabet $\Sigma$.  This example is canonical in the sense that $R(
\Sigma^*)$ is the \emph{free} $^*$-continuous Kleene algebra over
$\Sigma$.

An idempotent semiring $S = (S,\vee, \cdot, \bot,1)$ is said to be
\emph{locally closed}~\cite{journals/mfm/EsikK02} if it holds that for
every $x\in S$, there exists $N\ge 0$ so that
$\bigvee_{ n= 0}^N x^n= \bigvee_{ n= 0}^{ N+ 1} x^n$.  In any locally
closed idempotent semiring, we may define a $^*$-operation by
$x^*= \bigvee_{ n\ge 0} x^n$.

\begin{lem}%
  \label{le:starcont}
  Any locally closed idempotent semiring is a $^*$-continuous Kleene
  algebra.
\end{lem}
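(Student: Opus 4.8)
The plan is to fix an element $x\in S$, use local closure to identify $x^*$ with a \emph{finite} supremum of powers of $x$, and then verify all the defining conditions of a $^*$-continuous Kleene algebra by finitary manipulations that use only the distributivity of products over finite suprema, which holds in any idempotent semiring.

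First, I would fix $N\ge 0$ as given by local closure and write $s_m=\bigvee_{n=0}^m x^n$, so that $s_m\le s_{m+1}$ for all $m$ and $s_N=s_{N+1}$, \ie~$x^{N+1}\le s_N$. An induction on $m$ then shows $x^m\le s_N$ for all $m\ge 0$: this is clear for $m\le N+1$, and if $x^m\le s_N$ with $m\ge N+1$ then $x^{m+1}=x\cdot x^m\le x\cdot s_N=\bigvee_{n=1}^{N+1} x^n\le s_N$. Hence $s_{N'}=s_N$ for every $N'\ge N$, the supremum $\bigvee\{x^n\mid n\ge 0\}$ exists and equals $s_N$, and $x^*:=\bigvee_{n\ge 0} x^n=s_N$ is a well-defined operation on $S$.

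Next I would establish product preservation and the Kleene fixed-point laws. For any $y\in S$, finite distributivity gives $y x^*=y\,s_N=\bigvee_{n=0}^N y x^n$; since $x^n\le s_N$ implies $y x^n\le y\,s_N$ for all $n\ge 0$, this shows that $\bigvee_{n\ge 0} y x^n$ exists and equals $y x^*$, and symmetrically $\bigvee_{n\ge 0} x^n y=x^* y$. To see that $y x^*$ solves $z=zx\vee y$, compute $(y x^*)x\vee y=y(s_N x)\vee y=y\bigl(\bigvee_{n=1}^{N+1}x^n\bigr)\vee y=y\bigl(1\vee\bigvee_{n=1}^{N+1}x^n\bigr)=y\,s_{N+1}=y\,s_N=y x^*$. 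For minimality, if $z$ satisfies $zx\vee y\le z$ — in particular if $z$ is any solution of $z=zx\vee y$ — then $y\le z$ and, by induction, $y x^{n+1}=(y x^n)x\le z x\le z$, so $y x^n\le z$ for all $n$, whence $y x^*=\bigvee_{n=0}^N y x^n\le z$. The dual argument, multiplying on the left, shows that $x^* y$ is the least solution of $z=xz\vee y$. Collecting these facts: $S$ equipped with $^*$ is a Kleene algebra, the suprema $\bigvee\{x^n\mid n\ge 0\}$ all exist and equal $x^*$, and products preserve them, which is precisely the definition of a $^*$-continuous Kleene algebra.

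I expect the only genuine subtlety to be the bootstrapping step: local closure asserts only that the ascending chain $s_0\le s_1\le\dotsm$ stabilises between levels $N$ and $N+1$, and one must first propagate this to \emph{all} higher powers of $x$ before the infinite supremum $x^*$ is even guaranteed to exist; once that is in place, the Kleene algebra axioms and $^*$-continuity follow by routine use of finite distributivity.
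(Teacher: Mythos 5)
Your proof is correct and follows the same core route as the paper: use local closure to write $x^*$ as the finite supremum $\bigvee_{n=0}^N x^n$ and then conclude $^*$-continuity by finite distributivity over finite suprema. The additional work you include --- bootstrapping the stabilization of the chain of powers past level $N+1$ so that $\bigvee_{n\ge 0}x^n$ exists and equals $\bigvee_{n=0}^N x^n$, and verifying the least-fixed-point (Kleene algebra) axioms --- is precisely what the paper's proof leaves implicit (it simply asserts $y^*=\bigvee_{n=0}^N y^n$ and only checks the two continuity identities), so your write-up is a more detailed rendering of the same argument rather than a different one.
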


\begin{proof}%[Proof of Lemma~\ref{le:starcont}]
  Let $S=( S, \mathord\vee, \mathord\cdot, \bot, 1)$ be a locally
  closed idempotent semiring.  We need to show that for all elements
  $x, y, z\in S$,
  \begin{equation*}
    x y^*= \bigvee_{ n\ge 0}( x y^n)
    \qquad\text{and}\qquad y^* z= \bigvee_{ n\ge 0}( y^n z)\,.
  \end{equation*}
  It is clear that the right-hand sides of the equations are less than
  or equal to their left-hand sides, so we are left with proving the
  other inequalities.  As $S$ is locally closed, there is $N\ge 0$
  such that $y^*= \bigvee_{ n= 0}^N y^n$, and then by distributivity,
  \[
    x y^*= x\Big( \bigvee_{ n= 0}^N y^n \Big)= \bigvee_{ n= 0}^N( x y^n)\le
  \bigvee_{ n\ge 0}( x y^n)\ ;
  \]
  similarly, $y^* z\le \bigvee_{ n\ge 0}( y^n z)$. %\qed
\end{proof}

\subsection{\texorpdfstring{$^*$}{Star}-Continuous Kleene \texorpdfstring{$\omega$}{Omega}-Algebras}%
\label{se:starcontkleom}

An \emph{idempotent semiring-semimodule
  pair}~\cite{journals/sgf/EsikK07, book/BloomE93} $( S, V)$ consists
of an idempotent semiring $S=( S, \vee, \cdot, \bot, 1)$ and a
commutative idempotent monoid $V=( V, \vee, \bot)$ which is equipped
with a left $S$-action $S\times V\to V$, $( s, v)\mapsto s v$,
satisfying
\begin{alignat*}{3}
  ( s\vee s') v &= s v\vee s' v \qquad\qquad&
  s( v\vee v') &= s v\vee s v' \qquad\qquad&
  \bot v &= \bot \\
  (s s') v &= s( s' v) &
  s \bot &= \bot &
  1 v &= v
\end{alignat*}
for all $s, s'\in S$ and $v\in V$.  In that case, we also call $V$ a
\emph{(left) $S$-semimodule}.

A \emph{generalized $^*$-continuous Kleene
  algebra}~\cite{conf/dlt/EsikFL15} is an idempotent
semiring-semi\-module pair $( S, V)$ where $S=(S,\vee,\cdot, ^*,\bot,1)$
is a $^*$-continuous Kleene algebra such that for all $x, y\in S$ and
for all $v\in V$,
\begin{equation*}
  x y^* v= \bigvee_{ n\ge 0} x y^n v
\end{equation*}

A \emph{$^*$-continuous Kleene
  $\omega$-algebra}~\cite{conf/dlt/EsikFL15} consists of a generalized
$^*$-continuous Kleene algebra $( S, V)$ together with an infinite
product operation $S^\omega\to V$ which maps every infinite sequence
$x_0, x_1,\dotsc$ in $S$ to an element $\prod_{ n\ge 0} x_n$ of $V$.
The infinite product is subject to the following conditions:
\begin{itemize}
\item For all $x_0, x_1,\dotsc\in S$,
  \begin{equation}
  \prod_{ n\ge 0} x_n= x_0 \prod_{ n\ge 0} x_{ n+
    1},  \tag*{\Ax1}
  \end{equation}
\item Let $x_0, x_1,\dotsc\in S$ and $0= n_0\le n_1\le\dotsm$ a
  sequence which increases without a bound. Let
  $y_k= x_{ n_k}\dotsm x_{ n_{ k+ 1}- 1}$ for all $k\ge 0$.  Then
  \begin{equation}
  \prod_{ n\ge 0} x_n= \prod_{ k\ge 0} y_k \tag*{\Ax2}
  \end{equation}
\item For all $x_0, x_1,\dotsc, y, z\in S$,
  \begin{equation}
  \prod_{ n\ge 0}( x_n( y\vee z))= \adjustlimits \bigvee_{ x_0', x_1',\dotsc\in\{ y, z\}\;} \prod_{ n\ge 0} x_n x_n' \tag*{\Ax3}
  \end{equation}
\item For all $x, y_0, y_1,\dotsc\in S$,
  \begin{equation}
  \prod_{ n\ge 0} x^* y_n= \adjustlimits \bigvee_{ k_0, k_1,\dotsc\ge 0\;} \prod_{ n\ge 0} x^{ k_n} y_n \tag*{\Ax4}
  \end{equation}
\end{itemize}
Hence the infinite product extends the finite product~\Ax1; it is
finitely associative~\Ax2; it preserves finite suprema~\Ax3; and it
preserves the $^*$-operation (and hence infinite suprema of the form
$\bigvee_{ n\ge 0} x^n$)~\Ax4.  The infinite product gives rise to an
\emph{$\omega$-operation} $^\omega: S\to V$ defined by $x^\omega=
\prod_{ n\ge 0} x$.

An example of a $^*$-continuous Kleene $\omega$-algebra is the
structure $( P( \Sigma^*), P( \Sigma^\infty))$ consisting of the set
$P( \Sigma^*)$ of languages of finite words and of the set $P(
\Sigma^\infty)$ of finite or infinite words over an alphabet
$\Sigma$.  This is, in fact, a \emph{continuous} Kleene
$\omega$-algebra~\cite{conf/dlt/EsikFL15} in the sense that the
infinite product preserves \emph{all} suprema.

A $^*$-continuous Kleene $\omega$-algebra which is \emph{not}
continuous is $( R( \Sigma^*), R'( \Sigma^\infty))$, where
$R( \Sigma^*)$ is the set of regular languages over $\Sigma$, and
$R'( \Sigma^\infty)$ contains all subsets of the set $\Sigma^\infty$
of finite or infinite words which are finite unions of \emph{finitary}
infinite products of regular languages.  This is in fact the
\emph{free} finitary $^*$-continuous Kleene $\omega$-algebra over
$\Sigma$, see~\cite{conf/dlt/EsikFL15}.

\subsection{Matrix Semiring-Semimodule Pairs}

For any idempotent semiring $S$ and $n\ge 1$, we can form the matrix
semiring $S^{ n\times n}$ whose elements are $n\times n$-matrices of
elements of $S$ and whose sum and product are given as the usual
matrix sum and product.  It is known~\cite{DBLP:conf/mfcs/Kozen90}
that when $S$ is a $^*$-continuous Kleene algebra, then
$S^{ n\times n}$ is also a $^*$-continuous Kleene algebra, with the
$^*$-operation defined by
\begin{equation}
  \label{eq:mstar-sup}
  M^*_{ i, j}= \bigvee_{ m\ge 0} \bigvee\big\{ M_{ k_1, k_2} M_{ k_2,
    k_3}\dotsm M_{ k_{ m- 1}, k_m}\bigmid 1\le k_1,\dotsc, k_m\le n,
  k_1= i, k_m= j\big\}
\end{equation}
for all $M\in S^{ n\times n}$ and $1\le i, j\le n$.  Also, if $n\ge 2$
and $M= \left( \begin{smallmatrix} a & b \\ c & d \end{smallmatrix}
\right)$, where $a$ and $d$ are square matrices, then
\begin{equation}
  \label{eq:mstar}
  M^*=
  \begin{pmatrix}
    {( a\vee b d^* c)}^* & {( a\vee b d^* c)}^* b d^* \\
    {( d\vee c a^* b)}^* c a^* & {( d\vee c a^* b)}^*
  \end{pmatrix}.
\end{equation}

For any idempotent semiring-semimodule pair $( S, V)$ and $n\ge 1$, we
can form the matrix semiring-semimodule pair $( S^{ n\times n}, V^n)$
whose elements are $n\times n$-matrices of elements of $S$ and
$n$-dimensional (column) vectors of elements of $V$, with the action
of $S^{ n\times n}$ on $V^n$ given by the usual matrix-vector product.

When $( S, V)$ is a $^*$-continuous Kleene $\omega$-algebra, then
$( S^{ n\times n}, V^n)$ is a generalized $^*$-continuous Kleene
algebra~\cite{conf/dlt/EsikFL15}.
By~\cite[Lemma~17]{conf/dlt/EsikFL15}, there is an $\omega$-operation
on $S^{ n\times n}$ defined by
\begin{equation*}
  M^\omega_i= \bigvee_{1\le k_1,k_2,\dotsc\le n} M_{ i, k_1} M_{ k_1, k_2}\dotsm
\end{equation*}
for all $M\in S^{ n\times n}$ and $1\le i\le n$.  Also, if $n\ge 2$
and $M= \left( \begin{smallmatrix} a & b \\ c & d \end{smallmatrix}
\right)$, where $a$ and $d$ are square matrices, then
\begin{equation}
  \label{eq:momega}
  M^\omega=
  \begin{pmatrix}
    {( a\vee b d^* c)}^\omega\vee{( a\vee b d^* c)}^* b d^\omega \\
    {( d\vee c a^* b)}^\omega\vee{( d\vee c a^* b)}^* c a^\omega
  \end{pmatrix}.
\end{equation}

It can be shown~\cite{book/EsikK07} that the number of semiring
computations required in the computation of $M^*$ and $M^\omega$
in~\eqref{eq:mstar} and~\eqref{eq:momega} is $O( n^3)$ and $O( n^4)$,
respectively.

\subsection{Weighted automata}%
\label{se:weightedaut}

Let $( S, V)$ be a $^*$-continuous Kleene $\omega$-algebra and
$A\subseteq S$ a subset.  We write $\langle A\rangle$ for the set of
all finite suprema $a_1\vee\dotsm\vee a_m$ with $a_i\in A$ for each
$i= 1,\dotsc, m$.

A \emph{weighted automaton}~\cite{book/DrosteKV09} over $A$ of
dimension $n\ge 1$ is a tuple $( \alpha, M, k)$, where
$\alpha\in{\{ \bot, 1\}}^n$ is the initial vector,
$M\in \langle A \rangle^{ n\times n}$ is the transition matrix, and
$k$ is an integer $0\le k\le n$.  Combinatorially, this may be
represented as a transition system whose set of states is
$\{ 1,\dotsc, n\}$.  For any pair of states $i, j$, the transitions
from $i$ to $j$ are determined by the entry $M_{ i, j}$ of the
transition matrix: if $M_{ i, j}= a_1\vee\dotsm\vee a_m$, then there
are $m$ transitions from $i$ to $j$, respectively labeled
$a_1,\dotsc, a_m$.  The states $i$ with $\alpha_i= 1$ are
\emph{initial}, and the states $\{ 1,\dotsc, k\}$ are
\emph{accepting}.

The \emph{finite behavior} of a weighted automaton $A=( \alpha, M, k)$
is defined to be
\begin{equation*}
  | A|= \alpha M^* \kappa\,,
\end{equation*}
where $\kappa\in{\{ \bot, 1\}}^n$ is the vector given by $\kappa_i= 1$
for $i\le k$ and $\kappa_i= \bot$ for $i> k$.  (Note that $\alpha$ has
to be used as a \emph{row} vector for this multiplication to make
sense.)  It is clear by~\eqref{eq:mstar-sup} that $| A|$ is the supremum
of the products of the transition labels along all paths in $A$ from
any initial to any accepting state.

The \emph{B{\"u}chi behavior} of a weighted automaton
$A=( \alpha, M, k)$ is defined to be
\begin{equation*}
  \| A\|= \alpha
  \begin{pmatrix}
    {( a\vee b d^* c)}^\omega \\
    d^* c{( a\vee b d^* c)}^\omega
  \end{pmatrix},
\end{equation*}
where $a\in \langle A\rangle^{ k\times k}$,
$b\in \langle A\rangle^{ k\times( n- k)}$,
$c\in \langle A\rangle^{( n- k)\times n}$ and
$d\in \langle A\rangle^{( n- k)\times( n- k)}$ are such that
$M= \left( \begin{smallmatrix} a & b \\ c &
    d \end{smallmatrix}\right)$.
Note that $M$ is split in submatrices
$\left( \begin{smallmatrix} a & b \\ c & d \end{smallmatrix}\right)$
precisely so that $a$ contains transitions between accepting states
and $d$ contains transitions between non-accepting states.
By~\cite[Thm.~20]{conf/dlt/EsikFL15}, $\| A\|$ is the supremum of the
products of the transition labels along all infinite paths in $A$ from
any initial state which infinitely often visit an accepting state.

\section{Real-Time Energy Functions}

We are now ready to consider the algebra of real-time energy
functions.  We will build this up inductively, starting from the
functions which correspond to simple \emph{atomic} RTEAs.  These can
be composed to form \emph{linear} real-time energy functions, and
with additional maximum and star operations, they form a
$^*$-continuous Kleene algebra.  When also taking infinite behaviors
into account, we get a $^*$-continuous Kleene $\omega$-algebra of
real-time energy functions.

Let $L={[ 0, \infty]}_\bot$ denote the set of non-negative real numbers
extended with a bottom element $\bot$ and a top element $\infty$.  We
use the standard order on $L$, \ie~the one on $\Realnn$ extended by
declaring $\bot\le x\le \infty$ for all $x\in L$.  $L$ is a complete
lattice, whose suprema we will denote by $\vee$ for binary and
$\bigvee$ for general supremum.  For convenience we also extend the
addition on $\Realnn$ to $L$ by declaring that
$\bot+ x= x+ \bot= \bot$ for all $x\in L$ and
$\infty+ x= x+ \infty= \infty$ for all $x\in L\setminus\{ \bot\}$.
Note that $\bot+ \infty= \infty+ \bot= \bot$.

Let $\F$ denote the set of monotonic functions
$f: L\times \Realni\to L$ (with the product order on
$L\times \Realni$) for which $f( \bot, t)= \bot$ for all
$t\in[ 0, \infty]$.  We will frequently write such functions in
curried form, using the isomorphism
$\langle L\times \Realni\to L\rangle\approx\langle \Realni\to L\to
L\rangle$.

\subsection{Linear Real-Time Energy Functions}%
\label{se:lrtef}

We will be concerned with the subset of functions in $\F$ consisting
of \emph{real-time energy functions} (RTEFs).  These correspond to
functions expressed by RTEAs, and we will construct them inductively.
We start with atomic RTEFs:

\begin{defi}
  Let $r, b, p\in \Real$ with $r\ge 0$, $p\le 0$ and $b\ge -p$.
  An \emph{atomic real-time energy function} is an element $f$ of $\F$
  such that $f( \bot, t)= \bot$, $f( \infty, t)= \infty$,
  $f( x, \infty)= \infty$, and
  \begin{equation*}
    f( x, t)=
    \begin{cases}
      x+ r t+ p &\text{if } x+ r t\ge b\,, \\
      \bot &\text{otherwise}
    \end{cases}
  \end{equation*}
  for all $x, t\in \Realnn$.  The numbers $r, b$ and $p$ are
  respectively called the \emph{rate}, \emph{bound} and \emph{price}
  of $f$. We denote by $\A\subseteq \F$ the set of atomic real-time
  energy functions.
\end{defi}

These functions arise from RTEAs with one transition:
\begin{equation*}
  \simpleauto[]{r}{p}{b}
\end{equation*}

Non-negativity of $r$ ensures that atomic RTEFs are monotonic. In our
examples, when the bound is not explicitly mentioned it corresponds to
the lowest possible one: $b= -p$.

Atomic RTEFs are naturally combined along acyclic paths by means of a
composition operator. Intuitively, a composition of two successive
atomic RTEFs determines the optimal output energy one can get after
spending some time in either one or both locations of the
corresponding automaton. This notion of composition is naturally
extended to all functions in $\F$, and formally defined as follows
(where $\circ$ denotes standard function composition).

\begin{defi}
  The \emph{composition} of $f, g\in \F$ is the element $f\compop g$
  of $\F$ such that
  \begin{equation}
    \label{eq:comp}
    \forall t\in \Realni:( f\compop g)( t)= \bigvee_{ t_1+ t_2= t} g(
    t_2)\circ f( t_1)
  \end{equation}
\end{defi}

Note that composition is written in diagrammatic order.  Uncurrying
the equation, we see that $( f\compop g)( x, t)= \bigvee_{ t_1+ t_2= t}
g( f( x, t_1), t_2)$.

\begin{rem}%
  \label{re:notassoc}
  Composition in $\F$ is not generally associative:\footnote{The
    authors thank an anonymous reviewer for this example.}  Let
  $f, g, h\in \F$ be the functions given by
  \begin{gather*}
    f( x, t)=
    \begin{cases}
      \bot &\text{if } x= \bot\,,\\
      t &\text{otherwise}\,,
    \end{cases}
    \qquad g( x, t)=
    \begin{cases}
      \bot &\text{if } x= \bot\,,\\
      0 &\text{if } x= 0\,,\\
      2 t &\text{otherwise}\,,
    \end{cases}
    \qquad h( x, t)=
    \begin{cases}
      \bot &\text{if } x= \bot\,,\\
      0 &\text{if } 0\le x< 2\,,\\
      1 &\text{otherwise}\,.
    \end{cases}
  \end{gather*}
  Then
  \begin{align*}
    (( f\compop g)\compop h)( 0, 1) &= \bigvee_{ t_1+ t_2= 1} h\Big(
    \bigvee_{ t_3+ t_4= t_1} g( f( 0, t_3), t_4), t_2\Big) \\
    &= \bigvee_{ t_1\le 1} h\Big( \bigvee_{ t_3+ t_4= t_1} g( t_3,
    t_4), 0\Big)= \bigvee_{ t_1\le 1} h( 2 t_1, 0)= 1\,,
  \intertext{whereas}
    ( f\compop( g\compop h))( 0, 1) &= \bigvee_{ t_1+ t_2= 1\,}
    \bigvee_{\, t_3+ t_4= t_2} h( g( f( 0, t_1), t_3), t_4) \\
    &= \bigvee_{ t_1+ t_3\le 1} h( g( t_1, t_3), 0)= 0\,,
  \end{align*}
  as $g( t_1, t_3)< 2$ for $t_1+ t_3\le 1$.
\end{rem}

We will in Definition~\ref{de:rtef} below introduce a subclass
$\E\subseteq \F$ in which composition \emph{is} associative, and then
we will restrict ourselves to this subclass.  Until then, we take the
convention that $\compop$ binds to the right, that is, a composition
$f_1\compop f_2\compop f_3$ is to be read as $f_1\compop( f_2\compop
f_3)$.

Compositions of atomic RTEFs along paths are called \emph{linear}
RTEFs:

\begin{defi}
  A \emph{linear real-time energy function} is a finite composition
  $f_1 \compop f_2 \compop \dots \compop f_n$ of atomic RTEFs.
\end{defi}

\begin{exa}%
  \label{ex:linear}
  As an example, and also to show that linear RTEFs can have quite
  complex behavior, we show the linear RTEF associated to one of the
  paths in the satellite example of the introduction.  Consider the
  following (linear) RTEA:\@
  \begin{equation*}
    \tripleauto{0}{-20}{20}{2}{-20}{20}{5}{-10}{10}
  \end{equation*}
  Its linear RTEF $f$ can be computed as follows:
  \begin{equation*}
    f( x, t)=
    \begin{cases}
      \bot &\text{if } x< 20\text{ or }( 20\le x< 40\text{ and } x+ 2
      t< 44) \\
      &\phantom{\text{if } x< 20}\text{ or }( x\ge 40\text{ and } x+ 5
      t< 50) \\
      2.5 x+ 5 t- 110 &\text{if } 20\le x< 40\text{ and } x+ 2 t\ge 44
      \\
      x+ 5 t- 50 &\text{if } x\ge 40\text{ and } x+ 5 t\ge 50
    \end{cases}
  \end{equation*}

  We show a graphical representation of $f$ in Fig.~\ref{fig:graph}.
  The left part of the figure shows the boundary between two regions
  in the $(x,t)$ plane, corresponding to the minimal value $0$
  achieved by the function.  Below this boundary, no path exists
  through the corresponding RTEA\@.  Above, the function is linear in
  $x$ and $t$, as shown in the right part of the figure.  The
  coefficient of $t$ corresponds to the maximal rate in the RTEA;\@ the
  coefficient of $x$ depends on the relative position of $x$ with
  respect to (partial sums of) the bounds $b_i$.
\end{exa}

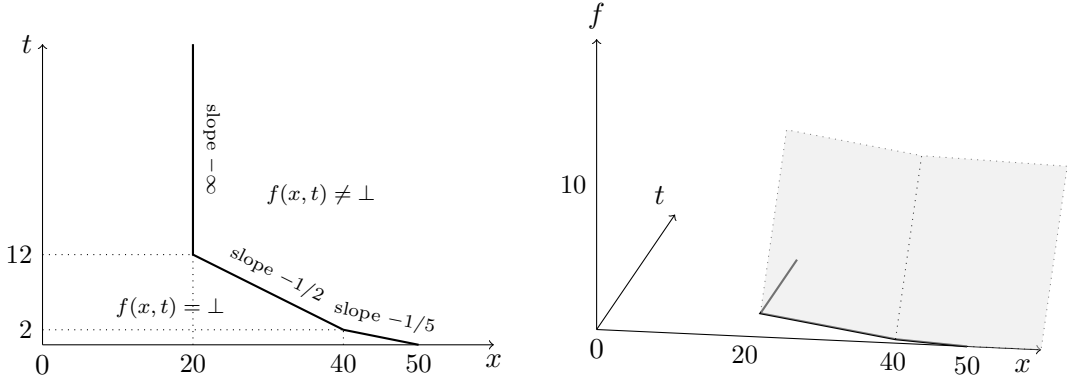
\begin{figure}[tbp]
\begin{tikzpicture}
% horizontal axis
\draw[->] (0,0) -- (6,0) node[anchor=north] {$x$};
% labels
\draw	(0,0) node[anchor=north] {0}
		(2,0) node[anchor=north] {\small $20$}
		(4,0) node[anchor=north] {\small $40$}
		(5,0) node[anchor=north] {\small $50$};
% vertical axis
\draw[->] (0,0) -- (0,4) node[anchor=east] {$t$};
% labels
\draw	(0,0.2) node[anchor=east] {\small $2$}
	(0,1.2) node[anchor=east] {\small $12$};
% legends
\draw (1.7,0.5) node {\scriptsize $f(x,t)=\bot$};
\draw (3.7,2) node {\scriptsize $f(x,t)\neq\bot$};
% piecewise linear
\coordinate (x0) at (2,4);
\coordinate (x1) at (2,1.2);
\coordinate (x2) at (4,0.2);
\coordinate (xn) at (5,0);
\draw[thick] (x0)
-- node[anchor=south,sloped]{\tiny slope $-\infty$} (x1)
-- node[anchor=south,sloped]{\tiny slope $-1/2$} (x2)
-- node[anchor=south,sloped]{\tiny slope $-1/5$}   (xn);
\draw[dotted] (2,0) -- (x1);
\draw[dotted] (4,0) -- (x2);
\draw[dotted] (0,0.2) -- (x2);
\draw[dotted] (0,1.2) -- (x1);
\end{tikzpicture}
\tdplotsetmaincoords{75}{10}
\begin{tikzpicture}[tdplot_main_coords]
% axes
\draw[->] (0,0,0) -- (6,0,0) node[anchor=north east]{$x$};
\draw[->] (0,0,0) -- (0,6,0) node[anchor=south east]{$t$};
\draw[->] (0,0,0) -- (0,0,4) node[anchor=south]{$f$};
\draw	(0,0,0) node[anchor=north] {0}
		(2,0,0) node[anchor=north] {\small $20$}
		(4,0,0) node[anchor=north] {\small $40$}
		(5,0,0) node[anchor=north] {\small $50$}
                (0,0,2)   node[anchor=east] {\small $10$};

% piecewise linear
\coordinate (x0) at (2,4);
\coordinate (x1) at (2,1.2);
\coordinate (x2) at (4,0.2);
\coordinate (xn) at (5,0);
\draw[thick] (x0)
--  (x1)
--  (x2)
--  (xn);
% planes
\draw[dotted,fill=gray!20,opacity=0.5] (x1) -- ++(0,2,2) -- (4,2.2,2) -- (x2);
\draw[dotted,fill=gray!20,opacity=0.5] (x2) -- ++(0,2,2) -- (6,2,2) --
(6,0,0) -- (xn);
\end{tikzpicture}
\caption{Graphical representation of the linear RTEF from Example~\ref{ex:linear}.}\label{fig:graph}
\end{figure}

\subsection{Normal Form}

Next we need to see that all linear RTEFs can be converted to a
\emph{normal form}:

\begin{defi}%
  \label{de:normalform}
  A sequence $f_1,\dotsc, f_n$ of atomic RTEFs, with rates, bounds and
  prices $r_1,\dotsc, r_n$, $b_1,\dotsc, b_n$ and $p_1,\dotsc, p_n$,
  respectively, is in \emph{normal form} if
  \begin{itemize}
  \item $r_1<\dotsm< r_n$,
  \item $b_1\le\dotsm\le b_n$, and
  \item $p_1=\dotsm= p_{ n- 1}= 0$.
  \end{itemize}
\end{defi}

\begin{lem}%
  \label{le:normalform}
  For any linear RTEF $f$ there exists a sequence $f_1,\dotsc, f_n$ of
  atomic RTEFs in normal form such that
  $f= f_1\compop\dotsm\compop f_n$.
\end{lem}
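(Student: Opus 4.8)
The plan is to transform an arbitrary linear RTEF $f=f_1\compop\dotsm\compop f_n$ into normal form by finitely many local rewrites, using associativity of $\compop$ (Lemma~\ref{le:Fmonoid}), in three stages: (1)~merge adjacent atomic factors until the rates are strictly increasing; (2)~with the rates now fixed, push all prices except the last one into the bounds; (3)~replace the bounds by their running maxima. Stage~1 is a terminating rewriting (each merge strictly decreases $n$); stages~2--3 are one-shot substitutions justified by the unfolded form of $\compop$.

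\emph{Stage 1.} The key claim is: if $r_i\ge r_{i+1}$, then $f_i\compop f_{i+1}$ is itself an atomic RTEF, with rate $\max(r_i,r_{i+1})=r_i$, price $p_i+p_{i+1}$ and bound $\max(b_i,b_{i+1}-p_i)$. To see this, fix $x,t\in\Realnn$; unfolding~\eqref{eq:comp}, a split $t=t_1+t_2$ contributes iff $x+r_it_1\ge b_i$ and $x+r_it_1+p_i+r_{i+1}t_2\ge b_{i+1}$, and then contributes $x+r_it_1+r_{i+1}t_2+p_i+p_{i+1}$. Writing $t_2=t-t_1$, this value equals $x+r_{i+1}t+p_i+p_{i+1}+(r_i-r_{i+1})t_1$ and the two feasibility conditions become $x+r_it_1\ge b_i$ and $x+p_i+r_{i+1}t+(r_i-r_{i+1})t_1\ge b_{i+1}$ — all three expressions affine in $t_1$ with non-negative slope; hence both the supremum over $t_1\in[0,t]$ and the question of feasibility of \emph{any} split are governed by $t_1=t$, giving value $x+r_it+p_i+p_{i+1}$ under the single condition $x+r_it\ge\max(b_i,b_{i+1}-p_i)$. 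On the remaining arguments both sides send $\bot$ to $\bot$ and return $\infty$ once an argument is $\infty$, so they agree; the requirements $r_i\ge 0$, $p_i+p_{i+1}\le 0$, and $\max(b_i,b_{i+1}-p_i)\ge-(p_i+p_{i+1})\ge 0$ all follow from $b_i\ge-p_i\ge 0$ and $b_{i+1}\ge-p_{i+1}$. Replacing any adjacent pair with $r_i\ge r_{i+1}$ by the resulting atomic RTEF strictly decreases $n$, so after finitely many steps we reach $f=f_1\compop\dotsm\compop f_n$ with $r_1<\dotsm<r_n$.

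\emph{Stages 2--3.} Now assume $r_1<\dotsm<r_n$. A routine induction unfolding $\compop$ (as in the proof of Lemma~\ref{le:Fmonoid}) shows that for $x,t\in\Realnn$, $(f_1\compop\dotsm\compop f_n)(x,t)$ is the supremum (with $\bot$ for the empty supremum) of $x+\sum_{k=1}^n r_kt_k+\sum_{k=1}^n p_k$ over all $t_1,\dotsc,t_n\ge 0$ with $\sum_k t_k=t$ satisfying $x+\sum_{k\le j}r_kt_k\ge b_j-\sum_{k<j}p_k$ for every $j$. Two consequences. First, putting $\tilde b_j=b_j-\sum_{k<j}p_k$ and replacing $f_1,\dotsc,f_n$ by atomic RTEFs with the same rates, with bounds $\tilde b_1,\dotsc,\tilde b_n$, and with prices $0,\dotsc,0,\sum_k p_k$ leaves every feasibility inequality and the objective unchanged, hence does not change the function; the side conditions hold since $\tilde b_j=b_j+\sum_{k<j}|p_k|\ge b_j\ge 0$ and $\tilde b_n\ge-\sum_k p_k\iff b_n\ge-p_n$. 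Second, since $j\mapsto x+\sum_{k\le j}r_kt_k$ is non-decreasing (all $r_k,t_k\ge 0$), a split satisfies $x+\sum_{k\le j}r_kt_k\ge\tilde b_j$ for all $j$ exactly when it satisfies $x+\sum_{k\le j}r_kt_k\ge\max_{i\le j}\tilde b_i$ for all $j$; so replacing each $\tilde b_j$ by the running maximum $\max_{i\le j}\tilde b_i$ again changes nothing. The atomic factors now have strictly increasing rates, non-decreasing bounds and all prices zero except the last, i.e.\ the sequence is in normal form.

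The main obstacle is the claim in Stage~1: with the slower state placed second, both the objective and every feasibility constraint are monotone in the time $t_1$ spent in the faster state, so that the ``spend all available time in the faster state'' split is simultaneously optimal and the most likely to be feasible. (Dually, this explains why a pair with strictly increasing rates cannot be collapsed: the objective is then decreasing in $t_1$ while a feasibility constraint may force $t_1>0$, producing the genuinely piecewise behaviour seen in Example~\ref{ex:linear}.) The remaining steps are bookkeeping with the unfolded supremum.
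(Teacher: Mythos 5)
Your proof is correct and follows essentially the same route as the paper's: first collapse adjacent factors whenever the rates fail to increase (the ``spend all time in the faster state'' argument, which is exactly the paper's first transformation with the same merged price $p_i+p_{i+1}$ and bound $\max(b_i,b_{i+1}-p_i)$), then shift the prices rightward into the bounds and take running maxima, which is the one-shot form of the paper's second, iterated local transformation. The only differences are presentational: you verify the transformations by an explicit monotone-in-$t_1$ analysis of the unfolded supremum, where the paper phrases the same computation as copying accepting runs between the two RTEAs.
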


\begin{exa}
  A normal form of the RTEF from Example~\ref{ex:linear} is as follows:
  \begin{equation*}
    \tripleauto{0}{0}{20}{2}{0}{40}{5}{-50}{50}
  \end{equation*}
  It is clear that its energy function is the same as the one of
  Example~\ref{ex:linear}: any run which satisfies the new constraints
  is equivalent to one which satisfies the old ones, and vice versa.
\end{exa}

\begin{proof}
  Let $f= f_1\compop\dotsm\compop f_n$, where $f_1,\dotsc, f_n$ are
  atomic RTEFs and assume $f_1, \dots, f_n$ is not in normal form.  If
  there is an index $k\in\{ 1,\dotsc, n- 1\}$ with $r_k\ge r_{ k+ 1}$,
  then we can use the following transformation to remove the state
  with rate $r_{ k+ 1}$:
  \begin{equation*}
    \label{eq:transshort}
    \doubleauto[r_{ k+ 2}]{r_k}{p_k}{b_k}{ r_{ k+ 1}}{ p_{ k+ 1}}{ b_{
        k+ 1}} \quad
    \underset{( r_k\ge r_{ k+ 1})}{\longmapsto} \simpleauto[r_{ k+
      2}]{r_k}{ p_k+ p_{ k+ 1}}{ \max( b_k, b_{ k+ 1}- p_k)}
  \end{equation*}

  Informally, any run through the RTEA for
  $f_1\compop\dotsm\compop f_n$ which maximizes output energy will
  spend no time in the state with rate $r_{ i+ 1}$, as this time may
  as well be spent in the state with rate $r_i$ without lowering
  output energy.  To make this argument precise, we prove that this
  transformation does not change the values of $f$.

  Let $f'$ denote the function which results from the transformation.
  Let $x\in L$ and $t\in \Realni$.  We show first that
  $f( x, t)\le f'( x, t)$, which is clear if $f( x, t)= \bot$.  If
  $f( x, t)\ne \bot$, then there is an accepting run through the RTEA
  corresponding to $f_1\compop\dotsm\compop f_n$.  Hence we have
  $t_1+\dotsm+ t_n= t$ such that
  $f( x, t)= x+ r_1 t_1+ p_1+\dotsm+ r_n t_n+ p_n$ and
  $x+\dotsm+ r_j t_j\ge b_j$ for all $j= 1,\dotsc, n$.  Let
  $t_k'= t_k+ t_{ k+ 1}$, $t_{ k+ 1}'= 0$, and $t_j'= t_j$ for all
  $j\notin\{ k, k+ 1\}$.  By $r_k\ge r_{ k+ 1}$, we know that
  $x+\dotsm+ r_k t_k'\ge b_k$ and
  $x+\dotsm+ r_{ k+ 1} t_{ k+ 1}'\ge b_{ k+ 1}$, hence
  $x+\dotsm+ r_j t_j'\ge b_j$ for all $j= 1,\dotsc, n$.  Hence this
  new run is also accepting, and $x+ r_1 t_1'+ p_1+\dotsm+ r_n t_n'+
  p_n\ge f( x, t)$.  Because $t_{ k+ 1}'= 0$, this also yields an
  accepting run through the RTEA for $f'$, showing that $f'( x, t)\ge
  f( x, t)$.

  The other inequality, $f( x, t)\ge f'( x, t)$, is clear if
  $f'( x, t)= \bot$.  Otherwise, there is an accepting run through the
  RTEA for $f'$.  Hence we have $t_1+\dotsm+ t_n= t$, with
  $t_{ k+ 1}= 0$, such that
  $f'( x, t)= x+ r_1 t_1+ p_1+\dotsm+ r_n t_n+ p_n$ and
  $x+\dotsm+ r_j t_j\ge b_j$ for all $j= 1,\dotsc, n$.  But then this
  is also an accepting run through the RTEA for $f$, showing that $f(
  x, t)\ge f'( x, t)$.

  We can hence assume that $r_1<\dotsm< r_n$.  To ensure the last two
  conditions of Definition~\ref{de:normalform}, we use the following
  transformation:
  \begin{equation*}
    \doubleauto[r_{ k+ 2}]{r_k}{p_k}{b_k}{ r_{ k+ 1}}{ p_{ k+ 1}}{ b_{
        k+ 1}} \quad
    \longmapsto \doubleauto[r_{ k+ 2}]{r_k}{0}{b_k}{ r_{ k+ 1}}{ p_k+
      p_{ k+ 1}}{ \max( b_k, b_{ k+ 1}- p_k)}
  \end{equation*}

  Informally, any run through the original RTEA can be copied to the
  other and vice versa, hence also this transformation does not change
  the values of $f$.  The precise argument is as follows.

  Let $f'$ denote the function which results from the transformation.
  Let $x\in L$ and $t\in \Realni$.  The inequality
  $f( x, t)\le f'( x, t)$ is again clear if $f( x, t)= \bot$, so
  assume otherwise.  Let $t_1+\dotsm+ t_n= t$ such that
  $f( x, t)= x+ r_1 t_1+ p_1+\dotsm+ r_n t_n+ p_n$ and
  $x+\dotsm+ r_j t_j\ge b_j$ for all $j= 1,\dotsc, n$.  Then this also
  yields an accepting run through the RTEA for $f'$, hence
  $f'( x, t)\ge f( x, t)$.  The proof that $f( x, t)\ge f'( x, t)$ is
  similar.
\end{proof}

Next we define a total preorder on normal-form sequences of atomic
RTEFs.  Using this ordering, we will later be able to show that the
semiring of general real-time energy functions is locally closed.

\begin{defi}
  Let $f_1,\dotsc, f_n$ and $f_1',\dotsc, f_{ n'}'$ be normal-form
  sequences of atomic RTEFs with rate sequences $r_1<\dotsm< r_n$ and
  $r_1'<\dotsm< r_{ n'}'$, respectively.  Then $f_1,\dotsc, f_n$ is
  \emph{not better than} $f_1',\dotsc, f_{ n'}'$, denoted
  $( f_1,\dotsc, f_n)\preceq( f_1',\dotsc, f_{ n'}')$, if
  $r_n\le r_{ n'}'$.
\end{defi}

Note that $( f_1,\dotsc, f_n)\preceq( f_1',\dotsc, f_{ n'}')$ does not
imply
$f_1\compop\dotsm\compop f_n\le f_1'\compop\dotsm\compop f_{ n'}'$
even for very simple functions.  For a counterexample, consider the
two following linear RTEFs $f= f_1$, $f'= f_1'\compop f_2'$ with
corresponding RTEAs:
\begin{gather*}
  f:\quad \simpleauto[]{4}{0}{0}
  \qquad\qquad
  f':\quad \doubleauto[]{1}{0}{1}{5}{0}{2}
\end{gather*}
We have $( f_1)\preceq( f_1', f_2')$, and for $x\ge 2$,
$f( x, t)= x+ 4t$ and $f'( x, t)= x+ 5t$, hence
$f( x, t)\le f'( x, t)$.  But $f( 0, 1)= 4$, whereas
$f'( 0, 1)= \bot$.

\begin{lem}%
  \label{le:subcommut}
  If $f= f_1\compop\dotsm\compop f_n$ and
  $f'= f_1'\compop\dotsm\compop f_{ n'}'$ are such that
  $( f_1,\dotsc, f_n)\preceq( f_1',\dotsc, f_{ n'}')$, then $f'\compop
  f\le f'$.
\end{lem}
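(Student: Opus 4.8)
The plan is to argue directly with runs through the (linear) RTEAs associated to $f$ and $f'$, producing from any run of $f'\compop f$ a run of $f'$ of no smaller output energy. Since the relation $\preceq$ is defined only on normal-form sequences, the sequences $f_1,\dotsc,f_n$ and $f_1',\dotsc,f_{n'}'$ are already in normal form, so $r_1<\dotsm< r_n$ and $r_1'<\dotsm< r_{n'}'$, and the hypothesis $(f_1,\dotsc,f_n)\preceq(f_1',\dotsc,f_{n'}')$ reads $r_n\le r_{n'}'$; in particular $r_j\le r_{n'}'$ for every $j\in\{1,\dotsc,n\}$. Fixing $x\in L$ and $t\in\Realni$, I must show $\comp{f'}{f}{t}(x)\le f'(x,t)$. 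The cases $x=\bot$ (both sides equal $\bot$) and $x=\infty$ or $t=\infty$ (both sides equal $\infty$, as any RTEF takes the value $\infty$ there) are immediate, so I reduce to $x,t\in\Realnn$. By Lemma~\ref{le:Fmonoid} (associativity) and uncurrying, $\comp{f'}{f}{t}(x)$ equals the supremum, over all decompositions $t_1'+\dotsm+ t_{n'}'+ t_1+\dotsm+ t_n= t$ into non-negative reals, of the output
\begin{equation*}
  O= x+ \sum_{i=1}^{n'} r_i' t_i'+ \sum_{i=1}^{n'} p_i'+ \sum_{j=1}^{n} r_j t_j+ \sum_{j=1}^{n} p_j
\end{equation*}
of the corresponding run through the RTEA for $f'\compop f$, provided every bound constraint encountered along that run holds (otherwise the decomposition contributes $\bot$). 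It thus suffices to take one valid such run and build from it a valid run through the RTEA for $f'$ with output at least $O$.

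The construction I would use is to move the whole time budget $t_1+\dotsm+ t_n$ spent in the $f$-part onto the final state of $f'$: put $s_i'= t_i'$ for $i<n'$ and $s_{n'}'= t_{n'}'+ t_1+\dotsm+ t_n$, so that $s_1'+\dotsm+ s_{n'}'= t$. The bound constraints for the first $n'-1$ transitions of $f'$ are exactly those that the chosen run of $f'\compop f$ already satisfied; the constraint at the last transition of $f'$ holds because $s_{n'}'\ge t_{n'}'$ and $r_{n'}'\ge 0$, so the accumulated energy there is no smaller than before. Hence this is a valid run through the RTEA for $f'$, and its output is
\begin{equation*}
  O'= x+ \sum_{i<n'} r_i' t_i'+ r_{n'}' s_{n'}'+ \sum_i p_i'= O+ \sum_j( r_{n'}'- r_j) t_j- \sum_j p_j\,.
\end{equation*}
As $r_{n'}'\ge r_j$, $t_j\ge 0$ and $p_j\le 0$ for every $j$, both correction terms are non-negative, so $O'\ge O$; and $O'\le f'(x,t)$ since the run just built is a run of $f'$. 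Taking the supremum over all decompositions then gives $\comp{f'}{f}{t}(x)\le f'(x,t)$, as required.

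The only genuinely delicate step — the one I would write out most carefully — is the verification that this redistribution of time still yields a \emph{valid} run of $f'$; this is where monotonicity (non-negativity of rates) is used, and where it is crucial that the surplus time be appended to the \emph{last} state of $f'$, whose rate dominates every rate occurring in $f$ by the normal-form and $\preceq$ hypotheses. That $\preceq$ cannot be dispensed with, and that one cannot simply compare $f$ and $f'$ pointwise, is precisely what the counterexample stated just before the lemma shows; the argument must genuinely route through the $f'$-portion of a run of $f'\compop f$ rather than through $f$ in isolation.
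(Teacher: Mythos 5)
Your proof is correct. The paper reaches the same inequality with a slightly different packaging: it first applies the normal-form transformations of Lemma~\ref{le:normalform} to the RTEA of $f'\compop f$ --- since every rate of $f$ satisfies $r_i\le r_n\le r_{n'}'$, the entire $f$-part collapses into the last state of $f'$, leaving an automaton identical to that of $f'$ except that the final transition carries price $p+p'\le p'$ and bound $\max(b_{n'}',b_n-p')\ge b_{n'}'$ --- and then observes that lowering the final price and raising the final bound can only decrease the function. Your argument inlines exactly this: shifting the whole time budget of the $f$-part onto the final, highest-rate state of $f'$ is the semantic content of that normalization step, and your two correction terms $\sum_j(r_{n'}'-r_j)t_j\ge 0$ and $-\sum_j p_j\ge 0$ are precisely the paper's price/bound comparison. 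What your route buys is self-containedness: no appeal to Lemma~\ref{le:normalform}, plus an explicit check that the redistributed run still satisfies all bound constraints (a point the paper dispatches with ``it is clear''). What the paper's route buys is brevity, by reusing machinery it has already established. Your handling of the degenerate cases $x=\bot$, $x=\infty$, $t=\infty$, and the reduction of $\comp{f'}{f}{t}(x)$ to a supremum over time decompositions (where invalid decompositions contribute $\bot$ and hence need no attainment of the supremum), are both sound.
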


Here the composition $f'\compop f$ is to be read as $f'\compop f=(
f_1\compop\dotsm\compop f_n)\compop( f_1'\compop\dotsm\compop f_{ n'}')$.

\begin{proof}
  Let $r_1<\dotsm< r_n$ and $r_1'<\dotsm< r_{ n'}'$ be the
  corresponding rate sequences, then $r_n\le r_{ n'}'$.
  The RTEAs for $f'\compop f$ and $f'$ are as follows, where we have
  transformed the former to normal form using that for all indices
  $i$, $r_i\le r_n\le r_{ n'}'$:
  \begin{equation*}
    \begin{tikzpicture}[shorten >=1pt, auto, ->, xscale=1]
      \begin{scope}
        \node at (-1.7,0) {$f'\compop f:$};
        \node[state, initial] (1') at (0,0) {$r_1'$};
        \node (2') at (1.8,0) {$\cdots$};
        \node[state] (n') at (3,0) {$r_{ n'}'$};
        \node[state, accepting] (end) at (6.8,0) {};
        \path (1') edge node[below] {$b_1'$} node[above] {$0$} (2');
        \path (2') edge (n');
        \path (n') edge node[below] {$\max( b_{ n'}', b_n- p_{ n'}')$}
        node[above] {$p_n+ p_{ n'}'$} (end);
      \end{scope}
      \begin{scope}[yshift=-4em]
        \node at (-1.6,0) {$f':$};
        \node[state, initial] (1') at (0,0) {$r_1'$};
        \node (2') at (1.8,0) {$\cdots$};
        \node[state] (n') at (3,0) {$r_{ n'}'$};
        \node[state, accepting] (end) at (6.8,0) {};
        \path (1') edge node[below] {$b_1'$} node[above] {$0$} (2');
        \path (2') edge (n');
        \path (n') edge node[below] {$b_{ n'}'$}
        node[above] {$p_{ n'}'$} (end);
      \end{scope}
    \end{tikzpicture}
  \end{equation*}
  As $p_n+ p_{ n'}'\le p_{ n'}'$ (because $p_n\le 0$) and
  $\max( b_{ n'}', b_n- p_{ n'}')\ge b_{ n'}'$, it is clear that
  $f'\compop f( x, t)\le f'( x, t)$ for all $x\in L$, $t\in \Realni$.
\end{proof}

\subsection{General Real-Time Energy Functions}

We now consider all paths that may arise in a real-time energy
automaton.  When two locations of an automaton may be joined by two
distinct paths, the optimal output energy is naturally obtained by
taking the maximum over both paths.  This gives rise to the following
definition.

\begin{defi}
  Let $f, g\in \F$. The function $f\vee g$ is defined as the
  pointwise supremum:
  \begin{equation*}
    \forall t\in \Realni: ( f\vee g)( t)= f( t)\vee g( t)
  \end{equation*}
\end{defi}

Let $\one, \bot, \top\in \F$ be the functions defined by
$\one( x, t)= x$ and $\bot( x, t)= \bot$ for all
$x\in L, t\in[ 0, \infty]$, $\top( \bot, t)= \bot$, and
$\top( x, t)= \infty$ for all $x, t\in[ 0, \infty]$.

\begin{lem}
  With operation $\vee$, $\F$ forms a complete lattice with bottom and
  top elements $\bot$ and $\top$.
\end{lem}

\begin{proof}
  For $\G\subseteq \F$, the supremum $\bigvee \G$ is given pointwise
  as $( \bigvee \G)( t)= \bigvee_{ f\in \G} f( t)$.  Completeness of
  $\Realni$ thus implies completeness of $\F$.  The claim regarding
  $\bot$ and $\top$ is clear.
\end{proof}

Finally, a cycle in an RTEA results in a $^*$-operation:

\begin{defi}
  Let $f \in \F$. The Kleene star of $f$ is the
  function $f^*\in \F$ such that
  \begin{equation*}
    \forall t\in \Realni: f^*( t)= \bigvee_{n\ge 0} f^n( t)
  \end{equation*}
\end{defi}

Note that $f^*$ is defined for all $f\in \F$ because $\F$ is a
complete lattice.  We can now define the set of general real-time
energy functions, corresponding to general RTEAs:

\begin{defi}%
  \label{de:rtef}
  The set $\E$ of \emph{real-time energy functions} is the subset of
  $\F$ generated by atomic RTEFs and $\{ \bot, \top\}$, \ie~the subset
  of $\F$ inductively defined by
  \begin{itemize}
  \item $\A\cup\{ \bot, \top\}\subseteq \E$,
  \item if $f, g\in \E$, then $f\compop g\in \E$ and $f\vee g\in \E$.
  \end{itemize}
\end{defi}

\noindent
We will show below that $\E$ is locally closed, which entails that for
each $f\in \E$, also $f^*\in \E$, hence $\E$ indeed encompasses all
RTEFs.

\begin{defi}%
  \label{de:pwl}
  A a function $f\in \F$ is \emph{piecewise linear} (PWL) if there
  exists a finite covering of disjoint convex polyhedra
  $X_1,\dotsc, X_N\subseteq L\times \Realni$, \ie~such that
  $X_1\cup\dotsm\cup X_N= L\times \Realni$ and
  $X_i\cap X_j= \emptyset$ for $i\ne j$, and functions
  $f_1,\dotsc, f_N\in \F$ such that for every $i$, $f_i$ is an affine
  function on $X_i$ and equal to $\bot$ outside, and
  $f= \bigvee_{ i= 1}^N f_i$.
\end{defi}

Also recall that a function $f\in \F$ is \emph{right-continuous} if
$f\big( \bigwedge X\big)= \bigwedge_{( x, t)\in X} f( x, t)$ for all
subsets $X\subseteq L\times \Realni$.

\begin{lem}%
  \label{le:pwl}
  All functions in $\E$ are PWL and right-continuous.
\end{lem}

\begin{proof}
  It is clear that all atomic RTEFS and also $\bot$ and $\top$ are PWL
  and right-continuous.  We proceed by structural induction.  Let $f,
  g\in \E$ and assume $f$ and $g$ to be PWL and right-continuous.

  Let first $h= f\vee g$.  To show that $h$ is right-continuous, let
  $X\subseteq L\times \Realni$, then
  \begin{align*}
    h\big( \bigwedge X\big) = f\big( \bigwedge X\big)\vee g\big(
    \bigwedge X\big) &= \Big( \bigwedge_{( x, t)\in X} f( x,
    t)\Big)\vee\Big( \bigwedge_{( x, t)\in X} g( x, t)\Big) \\
    &= \bigwedge_{( x, t)\in X}( f( x, t)\vee g( x, t)) = \bigwedge_{(
      x, t)\in X} h( x, t)\,.
  \end{align*}

  To show that $h$ is PWL, take coverings
  $X_1\cup\dotsm\cup X_N= Y_1\cup\dotsm\cup Y_M= L\times \Realni$ such
  that $f= \bigvee_{ i= 1}^N f_i$, $g= \bigvee_{ i= 1}^M g_i$, as in
  Def.~\ref{de:pwl}.  Let $Z_{ i j}= X_i\cap Y_j$ and
  $h_{ i j}= f_i\vee g_j$ for $i= 1,\dotsc, N$, $j= 1,\dotsc, M$, then
  $Z= Z_{ 1 1}\cup\dotsm\cup Z_{ N M}$ is a finite convex covering of
  $L\times \Realni$, each $h_{ i j}$ is affine on $Z_{ i j}$ and equal
  to $\bot$ outside, and $h= \bigvee h_{ i j}$.

  Now let $h= f\compop g$.  Let again
  $X_1\cup\dotsm\cup X_N= Y_1\cup\dotsm\cup Y_M= L\times \Realni$ be
  coverings such that $f= \bigvee_{ i= 1}^N f_i$,
  $g= \bigvee_{ i= 1}^M g_i$.  For every $i= 1,\dotsc, N$,
  $j= 1,\dotsc, M$, define sets
  $Z_{ i j}\subseteq L\times[ 0, \infty]\times[ 0, \infty]$ by
  \begin{equation*}
    Z_{ i j} = \{( x, t_1, t_2)\mid( f_i( x, t_1), t_2)\in Y_j\}
  \end{equation*}
  Being inverse images of convex polyhedra by linear functions, every
  $Z_{ i j}$ is itself a convex polyhedron; also, the $Z_{ i j}$ are
  disjoint.

  We have
  \begin{align*}
    h( x, t) = \bigvee_{ t_1+ t_2= t} g( f( x, t_1), t_2)
    &= \bigvee_{ t_1+ t_2= t} g\Big( \bigvee_{ i= 1}^N f_i( x, t_1), t_2\Big)
    \\
    &= \bigvee_{ t_1+ t_2= t} \bigvee_{ i= 1}^N g( f_i( x, t_1), t_2)
    = \bigvee_{ t_1+ t_2= t} \bigvee_{ i= 1}^N \bigvee_{ j= 1}^M g_j(
    f_i( x, t_1), t_2)
  \end{align*}

  Define functions
  $h_{ i j}: L\times[ 0, \infty]\times[ 0, \infty]\to L$, for every
  $i= 1,\dotsc, N$, $j= 1,\dotsc, M$, by
  $h_{ i j}( x, t_1, t_2)= g_j( f_i( x, t_1), t_2)$.  By definition,
  for every $i, j$, $h_{ i j}$ is affine on $Z_{ i j}$ and equal to
  $\bot$ outside.

  Continuing the equalities from above,
  \begin{equation*}
    h( x, t) = \bigvee_{ t_1+ t_2= t} \bigvee_{ i= 1}^N \bigvee_{ j= 1}^M
    h_{ i j}( x, t_1, t_2) = \bigvee_{ i= 1}^N \bigvee_{ j= 1}^M
    \bigvee_{ t_1+ t_2= t} h_{ i j}( x, t_1, t_2)
  \end{equation*}
  which holds because the $h_{ i j}$ are defined on disjoint sets.

  Now fix $i$ and $j$
  % $Z_{ i j}$ is a convex polyhedron, and
  % $h_{ i j}$ is affine on $Z_{ i j}$ and equal to $\bot$ outside.
  and define $\hat h_{ i j}\in \F$ by
  $\hat h_{ i j}( x, t)= \bigvee_{ t_1+ t_2= t} h_{ i j}( x, t_1,
  t_2)$.
  The function $\hat h_{ i j}$ is obtained from $h_{ i j}$ by
  ``sweeping'' $Z_{ i j}$ with the planes $t_1+ t_2= t$.  Now split
  $Z_{ i j}$ into pieces according to where this sweep meats its
  vertices, then $Z_{ i j}= \bigcup_{ k= 1}^L Z_{ i j k}$ for some
  $L\in \Nat$.  This creates a finite split of $Z_{ i j}$ into
  disjoint convex polyhedra.

  Split $\hat h_{ i j}$ into similar pieces $\hat h_{ i j k}$ such
  that each $\hat h_{ i j k}$ is affine on $Z_{ i j k}$ and equal to
  $\bot$ outside, then $\hat h_{ i j}= \bigvee_{ k= 1}^L \hat h_{ i j
    k}$.  Let $\hat Z_{ i j k}=\{( x, t_1+ t_2)\mid( x, t_1, t_2)\in
  Z_{ i j k}\}$, then each $\hat Z_{ i j k}$ is a convex polyhedron,
  and the $\hat Z_{ i j k}$ define a partition of $L\times[ 0,
  \infty]$.

  Continuing the equalities from above,
  \begin{equation*}
    h( x, t) = \bigvee_{ i= 1}^N \bigvee_{ j= 1}^M \hat h_{ i j}( x, t)
    = \bigvee_{ i= 1}^N \bigvee_{ j= 1}^M \bigvee_{ k= 1}^L \hat h_{ i
      j k}( x, t)
  \end{equation*}
  where each $\hat h_{ i j k}$ is affine on $\hat Z_{ i j k}$ and
  equal to $\bot$ outside.  That is, $h$ is PWL\@.

  To show that $h$ is right-continuous, first note that $f$ and $g$
  being right-continuous implies that we can assume that in the
  coverings
  $X_1\cup\dotsm\cup X_N= Y_1\cup\dotsm\cup Y_M= L\times \Realni$,
  each $X_i$ and each $Y_j$ include their lower boundaries.  That is,
  for every $i= 1,\dotsc, N$, $j= 1,\dotsc, M$, $X\subseteq X_i$, and
  $Y\subseteq Y_j$, also $\bigwedge X\in X_i$ and $\bigwedge Y\in
  Y_j$.

  Next we show that the sets $Z_{ i j}$ have the same property.  Let
  $Z\subseteq Z_{ i j}$ and $\bigwedge Z=( z, u_1, u_2)$.  Let
  $Y=\{( f_i( x, t_1), t_2)\mid( x, t_1, t_2)\in Z\}\subseteq Y_j$,
  then by right-continuity of $g_j$, we have $\bigwedge Y\in Y_j$.
  Now by right-continuity of $f_i$,
  $\bigwedge Y= \bigwedge\{( f_i( \bigwedge\{( x, t_1)\}), t_2)\}=(
  f_i( z, u_1), u_2)$, hence $( z, u_1, u_2)\in Z_{ i j}$.

  Hence all functions $h_{ i j}$ are right-continuous, and per their
  definition, this also applies to all functions $\hat h_{ i j}$.
  This means that we can assume all subdivisions $Z_{ i j k}$ to
  include their lower boundaries, and then the polyhedra $\hat Z_{ i j
    k}\subseteq L\times \Realni$ have the same property.  That is to
  say, $h$ is right-continuous.
\end{proof}

\begin{lem}%
  \label{le:compopassoc}
  On $\E$, the operation $\compop$ is associative.
\end{lem}

Note that Remark~\ref{re:notassoc} does not apply, because the
function $g$ in that example is not right-continuous.  On the other
hand, the proof uses both right-continuity and piecewise linearity.

\begin{proof}
  Let $f, g, k\in \E$; we prove that
  $( f\compop g)\compop k= f\compop( g\compop k)$.  Unrolling the
  definition, we see that we need to show that for all $x\in L$,
  $t, t_3\in \Realni$,
  $k( \bigvee_{ t_1+ t_2= t} g( f( x, t_1), t_2), t_3)= \bigvee_{ t_1+
    t_2= t} k( g( f( x, t_1), t_2), t_3)$.

  Let $X_1\cup\dotsm\cup X_N= Y_1\cup\dotsm\cup Y_M= L\times \Realni$,
  $f= \bigvee_{ i= 1}^N f_i$, $g= \bigvee_{ i= 1}^M g_i$, $Z_{ i j}$,
  and $h_{ i j}$, with
  $h_{ i j}( x, t_1, t_2)= g_j( f_i( x, t_1), t_2)$ like in the proof
  of Lemma~\ref{le:pwl}.  As $k$ is PWL, the above equality reduces to
  $k( \bigvee_{ t_1+ t_2= t} h_{ i j}( x, t_1, t_2), t_3)= \bigvee_{
    t_1+ t_2= t} k( h_{ i j}( x, t_1, t_2), t_3)$.
  We know that $Z_{ i j}$ includes its lower boundary, and by
  linearity of $h_{ i j}$, the value
  $\bigvee_{ t_1+ t_2= t} h_{ i j}( x, t_1, t_2)$ is assumed on that
  lower boundary.  The equality follows by piecewise linearity of $k$.
\end{proof}

\begin{prop}%
  \label{le:Fsemiring}
  With operations $\vee$ and $\compop$, $\E$ forms an idempotent
  semiring, with $\bot$ as unit for $\vee$ and $\one$ as unit for
  $\compop$.
\end{prop}

% To show completeness, we note that for $\mathcal X\subseteq \F$,
% $\bigvee \mathcal X$ is the function defined by
% $( \bigvee \mathcal X)( x, t)= \bigvee\{ f( x, t)\mid f\in \mathcal
% X\}$, which is monotonic and hence an element of $\F$.

\begin{proof}%[Proof of Lemma~\ref{le:Fmonoid}]
  The operation $\vee$ is clearly associative, and $\compop$ is so by
  Lemma~\ref{le:compopassoc}.

  Let $f\in \E$.  It is clear that $f\vee \bot= \bot\vee f= f$ and
  $f\compop \bot= \bot\compop f= \bot$.  For $f\compop \one$ and
  $\one\compop f$, we have
  $\comp{f}{\one}{t}(x)= \bigvee_{ t_1+ t_2= t} \one( f( x, t_1),
  t_2)= \bigvee_{ t_1+ t_2= t} f( x, t_1)= f( x, t)$
  because of monotonicity of $f$.  Similarly,
  $\comp{\one}{f}{t}(x)= \bigvee_{ t_1+ t_2= t} f( \one( x, t_1),
  t_2)= \bigvee_{ t_1+ t_2= t} f( x, t_2)= f( x, t)$
  because of monotonicity of $f$.

  We only miss to show the distributive laws.  Let $f, g, h\in \E$ and
  $t\in \Realni$, then
  \begin{align*}
    ( f\compop( g\vee h))( t)
    &= \bigvee_{ t_1+ t_2= t}( g\vee h)( t_2)\circ f( t_1) \\
    &= \bigvee_{ t_1+ t_2= t}( g( t_2)\vee h( t_2))\circ f( t_1) \\
    &= \bigvee_{ t_1+ t_2= t} g( t_2)\circ f( t_1)\vee h( t_2)\circ f(
      t_1) \\
    &= \bigvee_{ t_1+ t_2= t} g( t_2)\circ f( t_1)\vee \bigvee_{ t_1+
      t_2= t} h( t_2)\circ f( t_1) \\
    &= f\compop g( t)\vee f\compop h( t)=( f\compop g\vee f\compop h)( t)\,.
  \end{align*}

  Similarly, and using monotonicity of $h$, we see that
  \begin{align*}
    (( f\vee g)\compop h)( t)
    &= \bigvee_{ t_1+ t_2= t} h( t_2)\circ( f\vee g)( t_1) \\
    &= \bigvee_{ t_1+ t_2= t} h( t_2)\circ( f( t_1)\vee g( t_1)) \\
    &= \bigvee_{ t_1+ t_2= t} h( t_2)\circ f( t_1)\vee h( t_2)\circ g(
      t_1) \\
    &= \bigvee_{ t_1+ t_2= t} h( t_2)\circ f( t_1)\vee \bigvee_{ t_1+
      t_2= t} h( t_2)\circ g( t_1) \\
    &= f\compop h( t)\vee g\compop h( t)=( f\compop h\vee
      g\compop h)( t)\,.
  \end{align*}
  The proof is complete.
\end{proof}

\begin{lem}%
  \label{le:locallyclosed}
  For every $f\in \E$ there exists $N\ge 0$ so that
  $f^*= \bigvee_{ n= 0}^N f^n$.
\end{lem}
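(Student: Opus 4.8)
The plan is to reduce $f$ to a finite supremum of linear RTEFs, and then to use Lemma~\ref{le:subcommut} to show that arbitrarily long $\compop$-products of those linear pieces are always dominated by compositions of bounded length.

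First I would note that, since $\F$ is an idempotent semiring in which $\compop$ distributes over $\vee$ (Lemma~\ref{le:Fsemiring}) and $\E$ is generated from $\A$ by $\vee$ and $\compop$, every $f\in\E$ admits a disjunctive normal form $f = g_1\vee\dotsm\vee g_m$ in which each $g_i$ is a linear RTEF, \ie~a finite $\compop$-product of atomic RTEFs. For a linear RTEF $g$ presented as such a product, write $\rho(g)$ for the largest rate occurring among its atomic factors; then $\rho(g\compop g') = \max(\rho(g),\rho(g'))$, and by Lemma~\ref{le:normalform} the largest rate of any normal form of $g$ is again $\rho(g)$, since normalization never removes a maximal rate. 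Let $D$ be the number of distinct values among $\rho(g_1),\dotsc,\rho(g_m)$; I claim that $N = D$ works.

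Next I would expand powers: $f^n = \bigvee\{ g_{i_1}\compop\dotsm\compop g_{i_n}\mid 1\le i_1,\dotsc,i_n\le m\}$, so it suffices to bound each product $g_{i_1}\compop\dotsm\compop g_{i_n}$ by a composition of at most $D$ of the $g_i$. Fixing a word $i_1,\dotsc,i_n$, I would consider its prefixes $P_j = g_{i_1}\compop\dotsm\compop g_{i_j}$ and the running maxima $M_j = \rho(P_j) = \max_{l\le j}\rho(g_{i_l})$; the nondecreasing sequence $M_1\le\dotsm\le M_n$ takes at most $D$ distinct values, so the positions $1 = q_1 < q_2 < \dotsm < q_s$ at which it attains a new value satisfy $s\le D$. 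As long as $j$ lies in the block $[q_t, q_{t+1})$ (or $[q_s, n]$ for the last block), every factor appended after $P_{q_t}$ has $\rho$-value at most $M_{q_t} = \rho(P_{q_t})$, so Lemma~\ref{le:subcommut} (applied to normal forms via Lemma~\ref{le:normalform}, and using monotonicity of $\compop$) gives $P_j\le P_{q_t}$ by a short induction on $j$. In particular $P_n\le P_{q_s}$, and a second induction, using $P_{q_{t+1}} = P_{q_{t+1}-1}\compop g_{i_{q_{t+1}}}\le P_{q_t}\compop g_{i_{q_{t+1}}}$, yields $P_{q_s}\le g_{i_1}\compop g_{i_{q_2}}\compop\dotsm\compop g_{i_{q_s}}$, a composition of $s\le D$ of the $g_i$ and hence a summand of $f^s$.

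It then follows that $f^n\le\bigvee_{k=0}^{D} f^k$ for every $n\ge 0$ (trivially for $n=0$, since $f^0=\one$), whence $f^* = \bigvee_{n\ge 0} f^n = \bigvee_{k=0}^{D} f^k$, and in particular $\bigvee_{k=0}^{D} f^k = \bigvee_{k=0}^{D+1} f^k$. I expect the only real work to lie in the two index inductions above and in verifying the applicability of Lemma~\ref{le:subcommut} (which is why it matters that normalization preserves the top rate); conceptually everything rests on Lemma~\ref{le:subcommut}, which says that a linear RTEF of small top rate contributes nothing when appended after one of at least that rate.
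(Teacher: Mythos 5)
Your proof is correct and takes essentially the same route as the paper: decompose $f$ by distributivity into a finite supremum of linear RTEFs, pass to normal-form representations via Lemma~\ref{le:normalform}, and invoke Lemma~\ref{le:subcommut} (plus monotonicity of $\compop$) to dominate arbitrarily long compositions by compositions of bounded length. The only difference is bookkeeping: the paper orders the linear summands by $\preceq$ and reduces every word to one with strictly increasing indices, obtaining $N=m$, whereas your running-maximum/block argument reduces to strictly increasing top rates and gives the slightly sharper bound $N=D\le m$.
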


\begin{proof}
  By distributivity, we can write $f$ as a finite supremum
  $f= \bigvee_{ k= 1}^m f_k$ of linear energy functions
  $f_1,\dotsc, f_m$.  For each $k= 1,\dotsc, m$, let
  $f_k= f_{ k, 1}\compop\dotsm\compop f_{ k, n_k}$ be a normal-form
  representation.  By re-ordering the $f_k$ if necessary, and because
  $\preceq$ is total, we can assume that
  $( f_{ k, 1},\dotsc, f_{ k, n_k})\preceq( f_{ k+ 1, 1},\dotsc, f_{
    k+ 1, n_{ k+ 1}})$ for every $k= 1,\dotsc, n- 1$.

  We first show that
  $f^*\le \bigvee_{ 0\le n_1,\dotsc, n_m\le 1} f_1^{
    n_1}\compop\dotsm\compop f_m^{ n_m}$:
  The expansion of $f^*={( \bigvee_{ k= 1}^m f_k)}^*$ is an infinite
  supremum of finite compositions
  $f_{ i_1}\compop\dotsm\compop f_{ i_p}$.  By
  Lemma~\ref{le:subcommut}, any occurrence of
  $f_{ i_j}\compop f_{ i_{ j+ 1}}$ in such compositions with
  $i_j\ge i_{ j+ 1}$ can be replaced by $f_{ i_{ j+ 1}}$.  The
  compositions which are left have $i_j< i_{ j+ 1}$ for every $j$, so
  the claim follows.

  Now
  $\bigvee_{ 0\le n_1,\dotsc, n_m\le 1} f_1^{ n_1}\compop\dotsm\compop
  f_m^{ n_m}\le \bigvee_{ n= 0}^m {\big( \bigvee_{ k= 1}^m f_k\big)}^n=
  \bigvee_{ n= 0}^m f^n\le f^*$, so with $N= m$ the proof is complete.
\end{proof}

\begin{cor}
  $\E$ is locally closed, hence a $^*$-continuous Kleene algebra.
\end{cor}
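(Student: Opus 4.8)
The plan is to combine Lemma~\ref{le:locallyclosed} with Lemma~\ref{le:starcont}, once we have observed that $\E$ really is an idempotent semiring to which the notion of being locally closed applies. First I would note that $\E$, being by definition the subsemiring of $\F$ generated by $\A$, is closed under $\vee$ and $\compop$ and contains the units $\bot$ and $\one$; since $\F$ is an idempotent semiring by Lemma~\ref{le:Fsemiring}, all the idempotent-semiring axioms (commutativity and idempotency of $\vee$, associativity of $\compop$, distributivity, the zero laws) hold a fortiori in the substructure $\E$. In particular, for every $f\in\E$ and every $N\ge 0$ the finite supremum $\bigvee_{n=0}^{N} f^n = \one\vee f\vee( f\compop f)\vee\dotsm$ again lies in $\E$, so the defining condition of ``locally closed'' is meaningful for $\E$.

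Next I would invoke Lemma~\ref{le:locallyclosed}: for each $f\in\E$ there is some $N\ge 0$ with $f^*=\bigvee_{n=0}^{N} f^n$, where $f^*$ is the star computed in the complete lattice $\F$. For that same $N$ we then have the chain
\[
  \bigvee_{n=0}^{N} f^n \;\le\; \bigvee_{n=0}^{N+1} f^n \;\le\; \bigvee_{n\ge 0} f^n \;=\; f^* \;=\; \bigvee_{n=0}^{N} f^n\,,
\]
which forces $\bigvee_{n=0}^{N} f^n=\bigvee_{n=0}^{N+1} f^n$. This is exactly the condition for $\E$ to be locally closed.

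Finally, Lemma~\ref{le:starcont} applies verbatim to conclude that $\E$ is a $^*$-continuous Kleene algebra. I would add one reconciling remark: the $^*$-operation that Lemma~\ref{le:starcont} produces is $f\mapsto\bigvee_{n\ge 0}f^n$, and by local closedness this supremum is finite, hence lies in $\E$ and coincides with the function $f^*$ already defined on all of $\F$ — so there is no clash of notation, and $\E$ is indeed closed under $^*$, as promised in the paragraph preceding Lemma~\ref{le:locallyclosed}.

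There is no serious obstacle here; the argument is a two-line sandwich once Lemma~\ref{le:locallyclosed} is in hand. The only points that deserve a word of care are (i) checking that the relevant suprema actually live inside $\E$ and not merely in the ambient lattice $\F$ — they do, being finite — and (ii) making explicit that $\one,\bot\in\E$, so that $\E$ genuinely is an idempotent semiring in the sense required by Lemma~\ref{le:starcont}.
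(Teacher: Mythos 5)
Your proposal is correct and follows essentially the same route as the paper: apply Lemma~\ref{le:locallyclosed} to get $f^*=\bigvee_{n=0}^N f^n$, conclude $\bigvee_{n=0}^N f^n=\bigvee_{n=0}^{N+1} f^n$ so that $\E$ is locally closed, and then invoke Lemma~\ref{le:starcont}. The extra remarks about $\E$ containing $\one,\bot$ and being closed under the finite suprema involved are harmless bookkeeping that the paper leaves implicit.
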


\begin{proof}
  For every $f\in \E$ there is $N\ge 0$ so that
  $f^*= \bigvee_{ n= 0}^N f^n$ (Lemma~\ref{le:locallyclosed}), hence
  $\bigvee_{ n= 0}^N f^n= \bigvee_{ n= 0}^{ N+ 1} f^n$.  Thus $\E$ is
  locally closed, and by Lemma~\ref{le:starcont}, a $^*$-continuous
  Kleene algebra.
\end{proof}

\begin{exa}%
  \label{ex:star}
  To illustrate, we compute the Kleene star of the supremum
  $f= f_1\vee f_2$ of two linear RTEFs as below.  These are slight
  modifications of some RTEFs from the satellite example, modified to
  make the example more interesting:
  \begin{equation*}
    \begin{tikzpicture}[baseline, shorten >=1pt, ->, yscale=.4,
      xscale=1.2]
      \node at (7,-2) {$f_1$};
      \node at (7,2) {$f_2$};
      \node[state, initial] (0) at (0,0) {$0$};
      \node[state, accepting] (3) at (6,0) {};
      \node (f0) at (0,-2) {};
      \node (f1) at (2,-2) {};
      \node[state] (f2) at (4,-2) {$4$};
      \node (f3) at (6,-2) {};
      \node (h0) at (0,2) {};
      \node[state] (h1) at (2,2) {$1$};
      \node[state] (h2) at (4,2) {$5$};
      \node (h3) at (6,2) {};
      \draw (0) .. controls (f0) .. node[above, pos=.8] {$0$}
      node[below, pos=.8] {$30$} (f2);
      \draw (f2) .. controls (f3) .. node[above, pos=.17] {$-10$}
      node[below, pos=.17] {$30$} (3);
      \draw (0) .. controls (h0) .. node[above, pos=.83] {$0$}
      node[below, pos=.83] {$20$} (h1);
      \path (h1) edge node[above] {$0$} node[below] {$40$} (h2);
      \draw (h2) .. controls (h3) .. node[above, pos=.17] {$-50$}
      node[below, pos=.17] {$50$} (3);
    \end{tikzpicture}
  \end{equation*}

  These functions are in normal form and $f_1\preceq f_2$.
  Lemma~\ref{le:locallyclosed} and its proof allow us to conclude that
  $f^*= \one\vee f_1\vee f_2\vee f_1\compop f_2$.
  Figure~\ref{fi:star} shows the boundaries of definition of these
  functions and the regions in the $(x,t)$ plane where each of them
  dominates the supremum.
\end{exa}

\begin{figure}[tbp]
  \centering
\begin{tikzpicture}[xscale=1.5, yscale=1]
\definecolor{darkteal}{rgb}{0,0.3,0.3};
\definecolor{darkblue}{rgb}{0,0,0.5};
\definecolor{darkred}{rgb}{0.5,0,0};
% horizontal axis
\draw[->] (0,0) -- (8,0) node[anchor=north] {$x$};
% labels
\draw	(0,0) node[anchor=north] {0}
		(2,0) node[anchor=north] {\small $20$}
		(3,0) node[anchor=north] {\small $30$}
		(3.66,0) node[anchor=north] {\small $37$}
		(4,0) node[anchor=north] {\small $40$}
		(5,0) node[anchor=north] {\small $50$}
		(6,0) node[anchor=north] {\small $60$}
;
% vertical axis
\draw[->] (0,0) -- (0,8.5) node[anchor=east] {$t$};
% labels
\draw
        (0,0.2) node[anchor=east] {\small $2$}
        (0,0.7) node[anchor=east] {\small $7$}
        (0,2.2) node[anchor=east] {\small $22$}
        (0,4) node[anchor=east] {\small $40$}
	(0,5.28) node[anchor=east] {\small $53$}
	(0,5.55) node[anchor=east] {\small $55$}
        (0,8) node[anchor=east] {\small $80$}
;
% piecewise linear
\coordinate (f0) at (2.995,8);
\coordinate (f1) at (2.995,0);
\coordinate (h0) at (2,8);
\coordinate (h1) at (2,2.2);
\coordinate (h2) at (4,0.2);
\coordinate (h3) at (5,0);
\coordinate (fh0) at (3.01,8);
\coordinate (fh1) at (3.01,0.7);
\coordinate (fh2) at (5,0.2);
\coordinate (fh3) at (6,0);

\fill[pattern=my north west lines, line space=8pt, pattern color=red] (h0) -- (2,2.6) -- (3,1.8) -- (f0);
\fill[pattern=my north west lines, line space=8pt, pattern color=red] (3.66,8) -- (3.66,5.33) -- (4,4) -- (8,4)
-- (8,8);
\fill[pattern=my dots, pattern color=teal] (3.66,8) -- (3.66,5.33) -- (3,5.5) -- (f0);
\fill[pattern=my north east lines, line space=8pt, pattern color=blue] (8,0.25) -- (3,0.25) -- (3,5.5) --
(3.66,5.33) -- (4,4) -- (8,4);
\fill[fill=black,opacity=0.05] (0,8) -- (0,0) -- (8,0) --(8,0.25) --
(3,0.25) -- (3,1.8) -- (2,2.6) -- (2,8);

\draw[thick,color=blue] (f1) -- node[anchor=north,sloped,color=darkblue,fill=white,pos=.4] {\scriptsize
  $f_1(x,t)= 4t+x-10$} (f0);
\draw[thick,color=red] (h0) -- (h1) -- (h2) -- (h3);
\draw[thick,color=teal] (fh0) -- (fh1) -- (fh2) -- (fh3);

\draw[dotted] (3.66,8) -- (3.66,5.33) -- (4,4) -- (8,4);
\draw[dotted] (3.66,5.33) -- (3,5.5);
\draw[dotted] (2,2.6) -- (3,1.8);
\draw[dotted] (3,0.25) -- (8,0.25);

\draw[thick,dashed,color=red] (h2) -- node[anchor=south east,sloped,color=darkred,fill=white,pos=.47] {\scriptsize
  $f_2(x,t)= 5t+5x-210$} node[anchor=north east,sloped,color=darkred,fill=white,pos=.45] {\scriptsize
  $f_2(x,t)= 5t+x-50$} ++(0,7.8);
\draw[thick,dashed,color=teal] (fh2) -- node[anchor=south east,sloped,color=darkteal,fill=white,pos=.57] {\scriptsize
  $\comp{f_1}{f_2}{x,t}= 5t+1.25x-72.5$} node[anchor=north east,sloped,color=darkteal,fill=white,pos=.48] {\scriptsize
  $\comp{f_1}{f_2}{x,t}= 5t+x-60$} ++(0,7.8);

% legends
\draw (1.7,0.5) node[fill=white] {\small $f^*=\one$};
\draw (6.5,6) node[fill=white] {\small $f^*=f_2$};
\draw (6.5,2.2) node[fill=white] {\small $f^*=f_1$};
\draw (3.35,6.65) node[rotate=90,fill=white] {\small $f^*=f_1 \compop f_2$};
\end{tikzpicture}
  \caption{Computation of $f^*$ from Example~\ref{ex:star}.}\label{fi:star}
\end{figure}
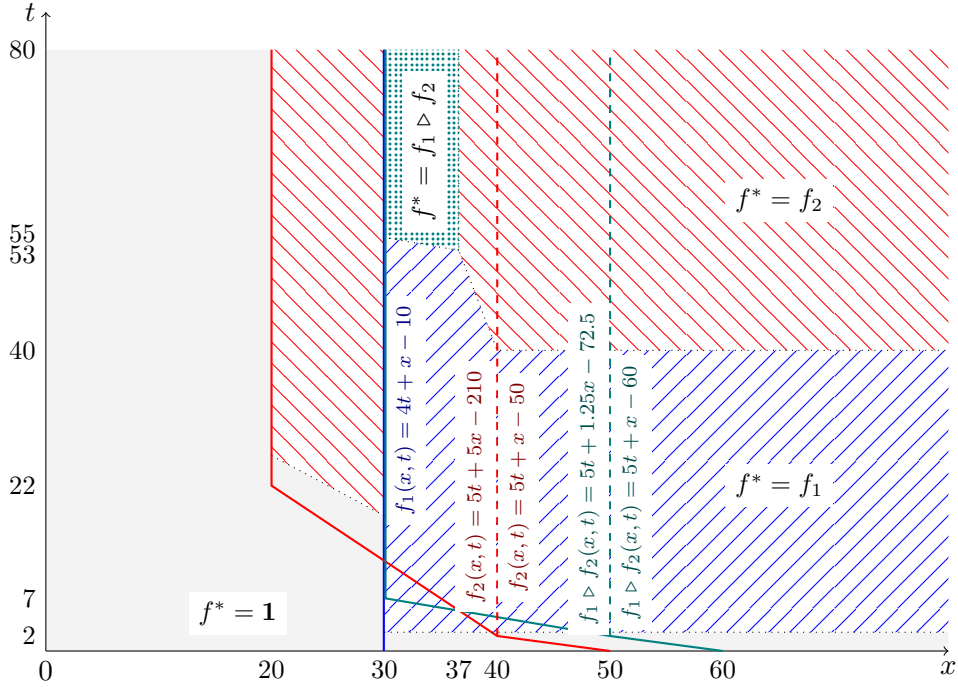

\subsection{Infinite Products}

Let $\Bool=\{ \lfalse, \ltrue\}$ denote the Boolean lattice with
standard order $\lfalse< \ltrue$.  Let $\V$ denote the set of
monotonic functions $v: L\times[ 0, \infty]\to \Bool$ for which
$v( \bot, t)= \lfalse$ for all $t\in[ 0, \infty]$.  We define an
infinite product operation $\E^\omega\to \V$:

\begin{defi}
  For an infinite sequence of functions $f_0, f_1,\dotsc\in \E$,
  $\prod_{ n\ge 0} f_n\in \V$ is the function defined for $x\in L$,
  $t\in[ 0, \infty]$ by $\prod_{ n\ge 0} f_n( x, t)= \ltrue$ iff there
  is an infinite sequence $t_0, t_1,\dotsc\in \Realni$ such that
  $\sum_{ n= 0}^\infty t_n= t$ and for all $n\ge 0$,
  $f_n( t_n)\circ\dotsm\circ f_0( t_0)( x)\ne \bot$.
\end{defi}

Hence $\prod_{ n\ge 0} f_n( x, t)= \ltrue$ iff in the infinite
composition $f_0\compop f_1\compop\dotsm( x, t)$, all finite prefixes
have values $\ne \bot$.
There is a (left) action of $\E$ on $\V$ given by
$( f, v)\mapsto f\compop v$, where the composition $f\compop v$ is
given by the same formula as composition $\compop$ on $\F$.  Let
$\bot\in \V$ denote the function given by $\bot( x, t)= \lfalse$.

\begin{lem}%
  \label{le:FVsemimod}
  With the $\E$-action $\compop$, $\vee$ as addition, and $\bot$ as
  unit, $\V$ is an idempotent left $\E$-semimodule.
\end{lem}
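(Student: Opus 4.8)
The plan is to verify the two ingredients of an idempotent left $\F$-semimodule: first that $(\V,\vee,\bot)$ is a commutative idempotent monoid, and then that the six semimodule axioms hold for the action $\compop\colon\F\times\V\to\V$. The key observation is that $\vee$ and $\compop$ on $\V$ are given by exactly the same pointwise and time-decomposition formulas as the corresponding operations on $\F$, so virtually every step can be copied from the proofs of Lemmas~\ref{le:Fmonoid} and~\ref{le:Fsemiring}, merely replacing the target lattice $L$ by $\Bool$. Concretely, I would first note that $\V$ is closed under $\vee$: a pointwise supremum of monotonic functions $L\times\Realni\to\Bool$ is again monotonic, and $(v\vee v')(\bot,t)=v(\bot,t)\vee v'(\bot,t)=\lfalse$; associativity, commutativity and idempotence of $\vee$ on $\V$ are inherited pointwise from the lattice $\Bool$, and the constant function $\bot\in\V$ is its neutral element.

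Next I would check that the action is well defined, i.e.\ that $f\compop v\in\V$ for $f\in\F$, $v\in\V$. Monotonicity is the same argument as for $\compop$ on $\F$: for $x\le x'$ use monotonicity of $f$ in its first argument, and for $t\le t'$ absorb the surplus time into the second component of a decomposition $t_1+t_2=t$ and use monotonicity of $v$ in its second argument. Moreover $(f\compop v)(\bot,t)=\bigvee_{t_1+t_2=t}v(f(\bot,t_1),t_2)=\bigvee_{t_1+t_2=t}v(\bot,t_2)=\lfalse$, since $f(\bot,t_1)=\bot$ because $f\in\F$ and $v(\bot,\cdot)=\lfalse$.

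Then I would dispatch the six semimodule axioms. The two distributivity laws $(f\vee g)\compop v=f\compop v\vee g\compop v$ and $f\compop(v\vee v')=f\compop v\vee f\compop v'$ are literally the two computations in the proof of Lemma~\ref{le:Fsemiring}, the first using monotonicity of $v$ and the second only that $\vee$ on $\V$ is pointwise. Associativity $(f\compop g)\compop v=f\compop(g\compop v)$ follows exactly as in Lemma~\ref{le:Fmonoid}, both sides equalling $\bigvee_{t_1+t_2+t_3=t}v(t_3)\circ g(t_2)\circ f(t_1)$. Finally $\bot\compop v=\bot$ and $f\compop\bot=\bot$ hold because the relevant suprema range over the single Boolean value $\lfalse$, and $\one\compop v=v$ because $(\one\compop v)(x,t)=\bigvee_{t_1+t_2=t}v(x,t_2)=v(x,t)$, the last equality using monotonicity of $v$ in $t$ together with the admissible decomposition $t_1=0$, $t_2=t$.

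I do not expect a genuine obstacle: the statement is a routine transfer of the semiring computations for $\F$. (An alternative abstract route: the order-embedding $\Bool\hookrightarrow L$ sending $\lfalse\mapsto\bot$ induces an injection $\V\hookrightarrow\F$ that preserves $\vee$ and intertwines the two $\compop$-actions, and its image is a sub-$\vee$-semilattice of $\F$ closed under left multiplication by $\F$, hence a sub-semimodule of the free $\F$-semimodule $(\F,\F)$; $\V$ then inherits the structure by transport along the isomorphism onto this image.) The one subtlety worth flagging is that, exactly as in Lemmas~\ref{le:Fmonoid} and~\ref{le:Fsemiring}, monotonicity of $v$ is genuinely used, both for well-definedness of the action and for the unit law $\one\compop v=v$.
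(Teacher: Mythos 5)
Your proposal is correct and follows exactly the route the paper takes: the paper's proof of Lemma~\ref{le:FVsemimod} simply states that it is similar to the proofs of Lemmas~\ref{le:Fmonoid} and~\ref{le:Fsemiring}, and your write-up just fills in those routine verifications (monoid structure of $(\V,\vee,\bot)$, well-definedness of the action, the distributivity, associativity, annihilation and unit laws) with the codomain $L$ replaced by $\Bool$.
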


\begin{proof}%[Proof of Lemma~\ref{le:FVsemimod}]
  Similar to the proof of Proposition~\ref{le:Fsemiring}.
\end{proof}

Let $\U\subseteq \V$ be the $\E$-subsemimodule generated by $\E$, that
is, the smallest (idempotent left) $\E$-semimodule contained in $\V$
which contains all infinite products of functions in~$\E$.

\begin{prop}%
  \label{pr:EUsckoa}
  $( \E, \U)$ forms a $^*$-continuous Kleene $\omega$-algebra.
\end{prop}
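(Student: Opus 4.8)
The plan is to check the definition of a $^*$-continuous Kleene $\omega$-algebra directly, the two recurring tools being local closedness of $\E$ (Lemma~\ref{le:locallyclosed}) and the concrete description of $\prod_{n\ge0}f_n$ through its finite prefixes. We already know that $(\E,\U)$ is an idempotent semiring-semimodule pair and that $\E$ is a $^*$-continuous Kleene algebra (Lemmas~\ref{le:Fsemiring} and~\ref{le:FVsemimod} and the corollary above, restricted to $\E$ and $\U$). To upgrade this to a \emph{generalized} $^*$-continuous Kleene algebra we must show $x\compop y^*\compop v=\bigvee_{n\ge0}x\compop y^n\compop v$ for all $x,y\in\E$ and $v\in\U$; the inequality ``$\ge$'' is trivial, and for ``$\le$'' Lemma~\ref{le:locallyclosed} provides $N$ with $y^*=\bigvee_{n=0}^N y^n$, so that by the distributive and semimodule laws $x\compop y^*\compop v=\bigvee_{n=0}^N x\compop y^n\compop v\le\bigvee_{n\ge0}x\compop y^n\compop v$. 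This is exactly the argument of Lemma~\ref{le:starcont}, carried out in the $\E$-action on $\U$.

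What remains is to handle the infinite product. First, $\prod_{n\ge0}f_n\in\U$ whenever all $f_n\in\E$, so that the operation restricts to $\E^\omega\to\U$; this holds by construction of $\U$, and~\Ax1 --- immediate by unfolding the first time slice $t_0$ in the definition --- allows one to reduce to the generators when needed. Axiom~\Ax2 states that regrouping consecutive factors into finite $\compop$-products leaves the infinite product unchanged; via the iterated-composition formula of Lemma~\ref{le:Fmonoid} this reduces to matching time decompositions $t_0+t_1+\dotsm=t$ on the two sides so that ``all finite prefixes are $\ne\bot$'' is preserved. Axiom~\Ax3 says the infinite product preserves a binary supremum: by~\Ax2 one may first pass to the infinite product of the alternating sequence $x_0,(y\vee z),x_1,(y\vee z),\dotsm$, and likewise on the right, and then use that $L$ and $\Bool$ are \emph{linearly} ordered, so every monotonic function, hence every finite composition of factors, distributes over a binary supremum. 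A finite prefix of the alternating product is thus $\ne\bot$ at a given $x$ iff some choice in $\{y,z\}$ at each of its finitely many ``$\vee$''-slots makes the corresponding pure composition $\ne\bot$ at $x$, and good choices for longer prefixes restrict to good choices for shorter ones; the good finite choices therefore form a finitely branching infinite tree, and a K\"onig-type argument produces a single infinite sequence $(x_n')\in\{y,z\}^\omega$, good for all prefixes, which together with the (fixed) time decomposition witnesses the corresponding summand of $\bigvee_{x_0',x_1',\dotsm\in\{y,z\}}\prod_{n\ge0}x_n\compop x_n'$. The reverse inequality for~\Ax3 is immediate from monotonicity of $\prod$, since $x_n'\le y\vee z$.

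Axiom~\Ax4 combines all of this with local closedness once more: writing $f^*=\bigvee_{i=0}^N f^i$ (Lemma~\ref{le:locallyclosed}) with each $f^i\in\E$, we get $f^*\compop g_n=\bigvee_{i=0}^N f^i\compop g_n$, and an $(N{+}1)$-fold version of the argument for~\Ax3 (using~\Ax2) pulls the per-step suprema out of the infinite product, so that $\prod_{n\ge0}f^*\compop g_n=\bigvee_{k_0,k_1,\dotsm\in\{0,\dotsc,N\}}\prod_{n\ge0}f^{k_n}\compop g_n$; since $f^k\le\bigvee_{i=0}^N f^i$ for \emph{every} $k\ge0$ (an easy induction from $\bigvee_{i=0}^N f^i=\bigvee_{i=0}^{N+1}f^i$), monotonicity of $\prod$ shows this last supremum already equals the unrestricted $\bigvee_{k_0,k_1,\dotsm\ge0}\prod_{n\ge0}f^{k_n}\compop g_n$. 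I expect the main obstacle to be making the supremum-exchanges of~\Ax2--\Ax4 fully rigorous: one must track the continuum of time decompositions $t_0+t_1+\dotsm=t$ through the regroupings, use that the relevant suprema over such decompositions are attained for real-time energy functions (a consequence of their piecewise-linear shape and of the bound constraints being closed), and carry out the compactness step for the infinite choice sequences. The algebraic content proper reduces, as above, to local closedness of $\E$.
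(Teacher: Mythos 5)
Your proposal is correct and takes essentially the same route as the paper's proof: the generalized $^*$-continuity condition and \Ax4 are handled via local closedness (Lemma~\ref{le:locallyclosed}) and distributivity, \Ax1 and \Ax2 by unfolding the definition of the infinite product and matching time decompositions, and \Ax3 by a K\"onig's-lemma argument on the binary tree of choices (the paper proves the slightly stronger identity $\prod_{n\ge 0}(f_n\vee g_n)=\bigvee_{h_n\in\{f_n,g_n\}}\prod_{n\ge 0}h_n$ directly, while you reach the same tree argument via \Ax2 and the fact that $L$ is a chain). The only point where you are thinner than the paper is the nontrivial direction of \Ax2, where the paper supplies the diagonal-type compactness argument over time decompositions that you only announce as the remaining obstacle.
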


\begin{proof}%[Proof of Proposition~\ref{pr:EUsckoa}]
  We first show that $( \E, \U)$ forms a generalized $^*$-continuous
  Kleene algebra: Let $f, g\in \E$ and $v\in \U$, then we need to see
  that
  $f\compop g^*\compop v= \bigvee_{ n\ge 0} f\compop g^n\compop v$.
  The right-hand side is trivially less than or equal to the left-hand
  side.  For the other inequality, as $g$ is $^*$-closed, we have
  $N\ge 0$ such that $g^*= \bigvee_{ n= 0}^N g^n$, and then
  \begin{equation*}
    f\compop g^*\compop v= f\compop\big( \bigvee_{ n= 0}^N
    g^n\big)\compop v= \bigvee_{ n= 0}^N f\compop g^n\compop v\le
    \bigvee_{ n\ge 0} f\compop g^n\compop v\,.
  \end{equation*}

  We now need to show that $( \E, \U)$ satisfies the conditions
  \Ax1--\Ax4 in Section~\ref{se:starcontkleom}.  As to \Ax1, let $f_0,
  f_1,\dotsc\in \E$, $x\in L$, and $t\in[ 0, \infty]$.  Then
  \begin{align*}
    f_0\compop \prod_{ n\ge 0} f_{ n+ 1}( x, t)
    &= \bigvee_{ t_0+ t'= t} \prod_{ n\ge 0} f_{ n+ 1}( t')\circ f_0(
      t_0)( x) \\
    &= \ltrue\text{ iff } \exists t_0+ t'= t: \prod_{ n\ge 0} f_{ n+
      1}( t')\circ f_0( t_0)( x)= \ltrue \\
    &= \ltrue\text{ iff } \exists t_0+ t'= t: \exists t_1+ t_2+\dotsm=
      t': \forall n\ge 1: \\
    &\hspace*{15em} f_n( t_n)\circ\dotsm\circ f_0( t_0)\ne \bot \\
    &= \prod_{ n\ge 0} f_n( x, t)\,.
  \end{align*}

  For \Ax2, let $f_0, f_1,\dotsc\in \E$, $x\in L$, $t\in[ 0, \infty]$,
  and $0= n_0\le n_1\le\dotsm$ a sequence which increases without a
  bound.  Then
  \begin{align*}
    & \smash[b]{\prod_{ k\ge 0}( f_{
      n_k}\compop\dotsm\compop f_{ n_{ k+ 1}- 1})( x, t)}= \ltrue \\
    &\qquad\qquad\text{iff } \exists u_0+ u_1+\dotsm= t: \forall k\ge 0: \\
    &\hspace*{10em} ( f_{ n_k}\compop\dotsm\compop f_{ n_{ k+ 1}- 1})(
      u_k)\circ\dotsm\circ( f_0\compop\dotsm\compop f_{ n_1- 1})(
      u_0)( x)\ne \bot \\
    &\qquad\qquad\text{iff } \exists u_0+ u_1+\dotsm= t: \forall k\ge 0: \exists
      t^k_0,\dotsc, t^k_{ n_{ k+ 1}- 1}: \\
    &\hspace*{10em} t^k_0+\dotsm+ t^k_{ n_1- 1}= u_0,\dotsc, t^k_{
      n_k}+\dotsm+ t^k_{ n_{ k+ 1}- 1}= u_k, \\
    &\hspace*{10em} f_{ n_{ k+ 1}- 1}( t^k_{ n_{
      k+ 1}- 1})\circ\dotsm\circ f_0( t^k_0)\ne \bot\,.
  \end{align*}
  We can use a diagonal-type argument to finish the
  proof: For every $k$, we have
  $t^{ k+ 1}_0,\dotsc, t^{ k+ 1}_{ n_{ k+ 2}- 1}$ such that
  $f_{ n_{ k+ 2}- 1}( t^{ k+ 1}_{ n_{ k+ 2}- 1})\circ\dotsm\circ f_0(
  t^{ k+ 1}_0)\ne \bot$.
  But then also
  $f_{ n_{ k+ 1}- 1}( t^{ k+ 1}_{ n_{ k+ 1}- 1})\circ\dotsm\circ f_0(
  t^{ k+ 1}_0)\ne \bot$,
  hence we can update
  $t^k_0:= t^{ k+ 1}_0,\dotsc, t^k_{ n_{ k+ 1}- 1}:= t^{ k+ 1}_{ n_{
      k+ 1}- 1}$.
  In the limit, we have $t_0, t_1,\dotsc$ with
  $t_0+\dotsm+ t_{ n_1- 1}= u_0, \dotsc$, hence $t_0+ t_1+\dotsm= t$,
  and $f_n( t_n)\circ\dotsm\circ f_0( t_0)= \bot$.

  % FIX: Argument can be simplified?

  To show the third condition, we prove that for all
  $f_0, f_1,\dotsc, g_0, g_1,\dotsc\in \E$,
  \begin{equation}
    \label{eq:ax3}
    \prod_{ n\ge 0}( f_n\vee g_n)= \adjustlimits \bigvee_{ h_n\in\{
      f_n, g_n\}\;} \prod_{ n\ge 0} h_n\,,
  \end{equation}
  which implies \Ax3.  By monotonicity of the infinite product, the
  right-hand side is less than or equal to the left-hand side.  To
  show the other inequality, let $x\in L$ and $t\in[ 0, \infty]$ and
  suppose that $\prod_{ n\ge 0}( f_n\vee g_n)( x, t)= \ltrue$.  We
  show that there is a choice of functions $h_n\in\{ f_n, g_n\}$ for
  all $n\ge 0$ such that $\prod_{ n\ge 0} h_n( x, t)= \ltrue$.

  Consider the infinite ordered binary tree where each node at level
  $n\ge 0$ is the source of an edge labeled $f_n$ and an edge labeled
  $g_n$, ordered as indicated.  We can assign to each node $u$ the
  composition $h_u$ of the functions that occur as the labels of the
  edges along the unique path from the root to that node.

  Let us mark a node $u$ if $h_u( x, t)\ne \bot$.  As
  $\prod_{ n\ge 0}( f_n\vee g_n)( x, t)= \ltrue$, each level contains
  a marked node.  Moreover, whenever a node is marked and has a
  predecessor, its predecessor is also marked.  By K{\"o}nig's
  lemma~\cite{Konig27} there is an infinite path going through marked
  nodes.  This infinite path gives rise to the sequence
  $h_0, h_1,\dotsc$ with $\prod_{ n\ge 0} h_n( x, t)= \ltrue$.

  For \Ax4, we need to see that for all $f, g_0, g_1,\dotsc\in \E$,
  \begin{equation*}
    \prod_{ n\ge 0} f^*\compop g_n= \adjustlimits \bigvee_{ k_0,
      k_1,\dotsc\ge 0\;} \prod_{ n\ge 0} f^{k_n}\compop g_n\,.
  \end{equation*}
  Again the right-hand side is less than or equal to the left-hand
  side because of monotonicity of the infinite product.  To show the
  other inequality, we have $N\ge 0$ such that
  $f^*= \bigvee_{ k= 0}^N f^k$, and then
  \begin{align}
    \prod_{ n\ge 0} f^*\compop g_n
    &= \prod_{ n\ge 0}\Big( \bigvee_{ k= 0}^N f^k\Big)\compop g_n
      \notag \\
    &= \prod_{ n\ge 0}\Big( \bigvee_{ k= 0}^N f^k\compop g_n\Big)
      \notag \\
    &= \adjustlimits \bigvee_{ 0\le k_0, k_1,\dotsc\le N\;} \prod_{
      n\ge 0} f^{ k_n}\compop g_n \label{eq:ax4.use-ax3} \\
    &\le \adjustlimits \bigvee_{ k_0, k_1,\dotsc\ge 0\;} \prod_{ n\ge
      0} f^{ k_n}\compop g_n\,, \notag
  \end{align}
  where~\eqref{eq:ax4.use-ax3} holds because of~\eqref{eq:ax3}. %\qed
\end{proof}

\begin{lem}%
  \label{le:fomega}
  For $f\in \E$, $f^\omega\in \U$ is given by
  \begin{equation*}
    f^\omega( x, t)=
    \begin{cases}
      \ltrue &\text{if } x\ne \bot, t= \infty, \text{and } \exists
      t_0\in[ 0, \infty]: f( x, t_0)\ge x\,; \\
      \ltrue &\text{if } x\ne \bot, t\ne \infty, \text{and } \exists
      t_0\le t: f( f( x, t_0), 0)\ge f( x, t_0)\ne \bot\,; \\
      \lfalse &\text{otherwise}\,.
    \end{cases}
  \end{equation*}
\end{lem}

\begin{proof}
  The situation is clear for $f= \bot$ or $x= \bot$, so we can assume
  $f\ne \bot$ and $x\ne \bot$.  Let $A$ be an RTEA which computes $f$.

  Assume first that $t\ne \infty$.  In that case,
  $f^\omega( x, t)= \ltrue$ iff there is an infinite sequence
  $t_0, t_1,\dotsc\in \Realnn$ whose partial sums converge to $t$:
  $\sum_{ n= 0}^\infty t_n= t$, and such that for all $n\ge 0$,
  $f( t_n)\circ\dotsm\circ f( t_0)( x)\ne \bot$.  By convergence, we
  have $\lim_{ n\to \infty} t_n= 0$.

  By piecewise linearity, we can write
  $L\times \Realni= \bigcup_{ i= 1}^N X_i$ and
  $f= \bigvee_{ i= 1}^N f_i$, such that $f_i( y, u)= a_i u+ b_i y+
  p_i$ for $( y, u)\in X_i$ and $f_i( y, u)= \bot$ for $( y, u)\notin
  X_i$.  By construction, $p_i\le 0$ for all $i$.

  Let $\alpha= \max\{ p_i\mid i= 1,\dotsc, N\}$, then $\alpha\le 0$,
  and $\alpha< 0$ iff the prices along all paths through $A$ are
  non-zero.  Let $i\in\{ 1,\dotsc, N\}$ be such that $( y, 0)\in X_i$
  for some $y$, then $b_i= 1$ (if no time is available, we cannot
  delay in any states).  By right-continuity,
  $\lim_{ u\to 0} f_i( y, u)= f_i( y, 0)= y+ p_i$; hence if
  $\alpha< 0$, then there is $n\ge 0$ such that
  $f( t_n)\circ\dotsm\circ f( t_0)( x)= \bot$.  If $\alpha= 0$ on the
  other hand, then we can choose $t_0= t$ and $t_n= 0$ for $n\ge 1$.

  Now we show the claim for $t= \infty$.  If there is $t_0\in \Realni$
  for which $f( x, t_0)\ge x$, then we can assume $t_0> 0$ and put
  $t_n= t_0$ for all $n$ to show that $f^\omega( x, t)= \ltrue$.

  We now show that if $f( x, t_0)< x$ for all $t_0\in \Realni$, then
  $f^\omega( x, t)= \lfalse$.  Let
  $\alpha= \sup\{ f( x, t_0)- x\mid t_0\in \Realni\}$, then
  $\alpha< 0$ as $\Realni$ is compact.  We have
  $f( x, t_0)\le x+ \alpha$ for all $t_0\in \Realni$.  Now entering
  the RTEA $A$ for $f$ with initial energy lower than $x$ can disable
  some paths, but will not enable any new behavior, hence for
  $x'\le x$ and any $t_1\in \Realni$,
  $f( x', t_1)\le f( x, t_1)+ x'- x$.  Hence
  $f( t_1)\circ f( t_0)( x)\le f( x, t_1)+ f( x, t_0)- x\le f( x,
  t_1)+ \alpha\le x+ 2 \alpha$
  for all $t_0, t_1\in \Realni$.  By induction, we see that for all
  infinite sequences $t_0, t_1,\dotsc\in \Realni$ and all $n\ge 0$,
  $f( t_n)\circ\dotsm\circ f( t_0)( x)\le x+ n \alpha$.  By
  $\alpha< 0$, $f^\omega( x, t)= \lfalse$.  We have shown that
  $f^\omega( x, t)= \ltrue$ iff there exists $t_0\in \Realni$ with
  $f( x, t_0)\ge x$.
\end{proof}

\section{Decidability}

We can now apply the results of Section~\ref{se:weightedaut} to see
that our decision problems as stated at the end of
Section~\ref{se:rtea} are decidable.  Let $A=( S, s_0, F, T, r)$ be an
RTEA, with matrix representation $( \alpha, M, K)$, and
$x_0, t, y\in \Realni$.

\begin{thm}
  There exists a finite run
  $( s_0, x_0, t)\leadsto\dotsm\leadsto( s, x, t')$ in $A$ with
  $s\in F$ iff $| A|( x_0, t)> \bot$.
\end{thm}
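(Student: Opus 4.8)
The plan is to unfold $|A| = \alpha M^* \kappa$ into a supremum over paths and then exhibit a direct correspondence between runs of $A$ and the time-splittings that witness a non-$\bot$ value of a path-composition of atomic RTEFs. Concretely, I would first recall that by~\eqref{eq:mstar-sup} (as noted right after the definition of the finite behaviour) $|A|$ equals the supremum, over all paths $\pi$ in the automaton $( \alpha, M, K)$ from $s_0$ to an accepting state, of the composition $f_\pi = f_1 \compop \dots \compop f_\ell$ of the atomic RTEFs labelling the edges of $\pi$; here the edges of $\pi$ are transitions $s_0 \ttto{p_1}{b_1} s_1 \ttto{p_2}{b_2} \dots \ttto{p_\ell}{b_\ell} s_\ell$ of $A$, $f_j$ is the atomic RTEF of rate $r( s_{ j-1})$, bound $b_j$, price $p_j$, and the empty path ($\ell = 0$, $f_\pi = \one$) arises exactly when $s_0 \in F$. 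Since $\bigvee X = \bot$ iff $X \subseteq \{ \bot\}$ in the complete lattice $L$, evaluating at $( x_0, t)$ shows $|A|( x_0, t) > \bot$ iff $f_\pi( x_0, t) \ne \bot$ for some such $\pi$. So it suffices to prove, for each such $\pi$, that $f_\pi( x_0, t) \ne \bot$ iff $A$ has a run $( s_0, x_0, t) \leadsto \dots \leadsto ( s_\ell, x_\ell, t_\ell)$ following $\pi$ (hence ending in $s_\ell \in F$).

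Both directions of this equivalence are obtained by reading off the definitions of $\leadsto$, of atomic RTEFs, and of $\compop$. From a run $( s_0, \xi_0, \theta_0) \leadsto \dots \leadsto ( s_\ell, \xi_\ell, \theta_\ell)$ along $\pi$, with $\xi_0 = x_0$ and $\theta_0 = t$, set $\tau_j = \theta_{ j-1} - \theta_j \ge 0$ (the time spent in $s_{ j-1}$); the defining clause of $\leadsto$ says precisely $\xi_{ j-1} + \tau_j r( s_{ j-1}) \ge b_j$ and $\xi_j = \xi_{ j-1} + \tau_j r( s_{ j-1}) + p_j$, that is, $f_j( \xi_{ j-1}, \tau_j) = \xi_j \ne \bot$. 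As $\tau_1 + \dots + \tau_\ell = t - \theta_\ell \le t$, I absorb the leftover $\theta_\ell$ into the last slot, taking $t_j = \tau_j$ for $j < \ell$ and $t_\ell = \tau_\ell + \theta_\ell$ (so $\sum_j t_j = t$); monotonicity of $f_\ell$ in its time argument (non-negativity of rates) then gives $f_\ell( \xi_{ \ell-1}, t_\ell) \ge \xi_\ell \ne \bot$, whence $f_\pi( x_0, t) \ge \xi_\ell > \bot$ by the defining formula for $\compop$ (and for $\ell = 0$, $s_0 \in F$ and $f_\pi( x_0, t) = x_0 \ge 0 > \bot$). Conversely, if $f_\pi( x_0, t) \ne \bot$, the formula for $\compop$ yields $t_1, \dots, t_\ell \ge 0$ with $\sum_j t_j = t$ so that, with $\xi_0 = x_0$ and $\xi_j = f_j( \xi_{ j-1}, t_j)$, we have $\xi_\ell \ne \bot$; $\bot$-strictness of each $f_j$ forces all $\xi_j \ne \bot$, and — in the principal case $x_0, t \in \Realnn$, where every $\xi_j$ is then a finite real — the atomic-RTEF definition gives $\xi_{ j-1} + t_j r( s_{ j-1}) \ge b_j$ and $\xi_j = \xi_{ j-1} + t_j r( s_{ j-1}) + p_j$. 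Then, putting $\theta_0 = t$ and $\theta_j = t - ( t_1 + \dots + t_j) \ge 0$, each $( s_{ j-1}, \xi_{ j-1}, \theta_{ j-1}) \leadsto ( s_j, \xi_j, \theta_j)$ is a legal step, giving the required run following $\pi$.

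Finally I would dispatch the boundary cases $x_0 = \infty$ and $t = \infty$ from the conventions $f( \infty, t) = \infty$, $f( x, \infty) = \infty$ on atomic RTEFs and their propagation through $\compop$: if some path from $s_0$ to $F$ exists, both sides hold (with a run in which the energy, resp.\ the available time, is unbounded), and if no such path exists and $s_0 \notin F$, both sides are false. The one place where genuine care is needed is reconciling the two ways time is accounted for — available time counting down along a run versus the time slots of $f_1 \compop \dots \compop f_\ell$ summing exactly to $t$ — which is precisely what the ``absorb the leftover $\theta_\ell$ into the last slot, then use monotonicity'' step handles; everything else is a direct transcription between the definition of $\leadsto$ and those of atomic RTEFs and $\compop$.
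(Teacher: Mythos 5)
The paper actually gives no proof of this theorem: it is stated as an immediate consequence of the remark (after the definition of $|A|=\alpha M^*\kappa$) that, by \eqref{eq:mstar-sup}, $|A|$ is the supremum of the products of transition labels along all paths from an initial to an accepting state, together with the evident correspondence between $\leadsto$-steps and evaluations of atomic RTEFs. Your write-up is precisely that intended argument, spelled out, and for the principal case $x_0,t\in\Realnn$ it is correct and careful: the two points that genuinely need saying --- that a non-$\bot$ supremum over time splittings has a non-$\bot$ witness splitting, and the ``absorb the leftover time $\theta_\ell$ into the last slot and use monotonicity'' step reconciling the counting-down clock of runs with the exact splitting $t_1+\dotsm+t_\ell=t$ in $\compop$ --- are both handled correctly, as is the empty-path case $s_0\in F$.

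The one soft spot is your boundary discussion for $t=\infty$. Your parenthetical claim that the existence of a graph path from $s_0$ to $F$ yields a run ``in which the available time is unbounded'' fails, under the natural reading of delays, when some state on the path has rate $0$ and the outgoing bound exceeds the attainable energy: waiting arbitrarily long gains nothing there, so no run exists, while the right-hand side is $\infty>\bot$ anyway because atomic RTEFs are defined to be $\infty$ at $t=\infty$ irrespective of the bound. To be fair, this corner is an ambiguity of the paper itself --- configurations are defined over $S\times\Realnn\times\Realnn$, so runs from $(s_0,x_0,\infty)$ are only informally defined, and the paper offers no proof to compare against --- but as written your proof asserts the equivalence at $t=\infty$ on a justification that is not literally true; you should either fix an explicit convention for the run semantics at $t=\infty$ under which every path can be realized (matching the convention $f(x,\infty)=\infty$ for atomic RTEFs), or state the theorem's $t=\infty$ case as holding relative to such a convention. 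The case $x_0=\infty$ is fine as you argue it, since all bounds are then trivially met with zero delays.
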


\begin{thm}
  There exists a finite run
  $( s_0, x_0, t)\leadsto\dotsm\leadsto( s, x, t')$ in $A$ with
  $s\in F$ and $x\ge y$ iff $| A|( x_0, t)\ge y$.
\end{thm}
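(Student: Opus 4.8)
The plan is to read off $|A|(x_0,t)$ as a supremum of \emph{linear} real-time energy functions indexed by finite paths of $A$, to identify each such value with the largest final energy of a run along that path, and then to use local closedness of $\E$ to turn this supremum into a maximum; the coverability equivalence then drops out. Concretely, I would first record --- exactly as in the proof of the previous theorem --- that since every transition of $A$ carries a single atomic RTEF and matrix multiplication distributes over $\vee$, formula~\eqref{eq:mstar-sup} gives
\[
  |A|(x_0,t)=\bigvee\bigl\{\,f_\pi(x_0,t)\bigm|\pi\text{ a finite path in }A\text{ from }s_0\text{ to some }s\in F\,\bigr\},
\]
where $f_\pi=f_{e_1}\compop\dotsm\compop f_{e_n}$ is the composition of the atomic RTEFs labelling the edges $e_1,\dotsc,e_n$ of $\pi$.

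The key lemma I would prove is that for such a $\pi$, any $x_0\in L$, and any finite $t$,
\[
  f_\pi(x_0,t)=\max\bigl\{\,x_n\bigm|(s_0,x_0,t)\leadsto(s_1,x_1,t_1)\leadsto\dotsm\leadsto(s_n,x_n,t_n)\text{ along }\pi\,\bigr\}
\]
(with $\max\emptyset=\bot$). For this, unfold~\eqref{eq:comp} to $f_\pi(x_0,t)=\bigvee_{t_1+\dotsm+t_n=t}f_{e_n}(t_n)\circ\dotsm\circ f_{e_1}(t_1)(x_0)$ and observe that the bound side-condition of the $i$-th atomic RTEF is exactly the firing condition $x+(t-t')\,r(s_{i-1})\ge b_i$ of the $i$-th transition; hence a tuple $(t_1,\dotsc,t_n)$ of non-negative reals summing to $t$ whose composed value is $\ne\bot$ is nothing but a run along $\pi$ spending $t_i$ time units in $s_{i-1}$, its final energy being that value, and conversely any run (which may leave some of $t$ unused) is dominated by one that spends all of $t$, since all rates are non-negative. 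The supremum is attained because the admissible tuples form a compact simplex, on the closed polytope cut out inside it by the bound inequalities the composed value is an affine (hence continuous) function, and it equals $\bot$ off that polytope.

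With this lemma available I would finish as follows. By the corollary that $\E$ is locally closed, $|A|\in\E$ can be written as a \emph{finite} supremum $f_{\pi_1}\vee\dotsm\vee f_{\pi_K}$ of linear RTEFs realised by genuine $s_0$-to-$F$ paths $\pi_1,\dotsc,\pi_K$ of $A$: in~\eqref{eq:mstar} each star may be replaced by a bounded power, just as in the proof of Lemma~\ref{le:locallyclosed}, so that cycles get unrolled only finitely often; hence $|A|(x_0,t)=\max_l f_{\pi_l}(x_0,t)$. For the ``if'' direction, a run witnessing coverability traverses some path $\pi$ ending in $F$, and the lemma gives $y\le x\le f_\pi(x_0,t)\le|A|(x_0,t)$. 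For ``only if'', suppose $|A|(x_0,t)\ge y$: when $y=0$ this merely says $|A|(x_0,t)\ne\bot$, so the previous theorem already produces a run to $F$, which trivially ends with energy $\ge 0$; when $y>0$, choose $l$ with $f_{\pi_l}(x_0,t)\ge y>\bot$ and apply the lemma to get a run along $\pi_l$ with final energy $\ge y$.

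The hard part is the reduction in the last paragraph, \ie turning the a~priori infinite supremum over all paths into a maximum over finitely many: this is exactly where the normal-form machinery is needed, and one has to check that the rewriting used to prove local closedness preserves not only the value of the function but the existence of a \emph{witnessing} run --- which is precisely what the proofs of Lemmas~\ref{le:normalform} and~\ref{le:subcommut} establish. A secondary, routine point is the treatment of the boundary values $t=\infty$, $x_0=\infty$, $y=\infty$, where $f_\pi(x_0,\cdot)$ need not attain its supremum through a finite-time run; these cases are disposed of by direct inspection, using $f(x,\infty)=\infty$ for atomic RTEFs and, where convenient, the previous theorem, exactly as in the proof of the reachability result.
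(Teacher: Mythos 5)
The paper itself gives no separate proof of this theorem: it is stated as an immediate consequence of the matrix set-up, namely the remark after \eqref{eq:mstar-sup} that $|A|=\alpha M^*\kappa$ is the supremum of the composed transition labels over all accepting paths, together with the (implicit) correspondence between time allocations along a path and runs. Your proposal reconstructs exactly this argument and, in my view, correctly supplies the two points the paper leaves tacit: that $f_\pi(x_0,t)$ equals the \emph{maximum} final energy of runs along $\pi$ (your compactness/affineness argument for attainment, plus the observation that leftover time can always be absorbed since rates are non-negative, is sound), and that the path-supremum defining $|A|(x_0,t)$ is attained, which is needed for the ``only if'' direction when $y>0$. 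For the latter step, your appeal to local closedness is the right tool, and it can be made slightly cleaner than you suggest: since $\E$ is a $^*$-continuous Kleene algebra, \eqref{eq:mstar} holds for $M^*$, and recursively replacing each element star by the finite supremum of powers provided by Lemma~\ref{le:locallyclosed} expands $|A|$ into a finite supremum of compositions of entries of $M$ with matching indices, i.e.\ of $f_{\pi_l}$ for \emph{genuine} accepting paths $\pi_l$; so no separate check that the normal-form rewritings of Lemmas~\ref{le:normalform} and~\ref{le:subcommut} preserve witnessing runs is required. The boundary cases $t=\infty$, $x_0=\infty$, $y=\infty$ that you defer to inspection are likewise glossed over in the paper (the convention $f(x,\infty)=\infty$ for atomic RTEFs makes them slightly delicate), so this does not count against your proof relative to the paper's treatment.
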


\begin{thm}
  There exists $s\in F$ and an infinite run
  $( s_0, x_0, t)\leadsto( s_1, x_1, t_1)\leadsto\dotsm$ in $A$ in
  which $s_n= s$ for infinitely many $n\ge 0$ iff
  $\| A\|( x_0, t)= \top$.
\end{thm}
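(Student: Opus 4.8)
The plan is to run the statement through the machinery of Section~\ref{se:weightedaut}: to unfold $\|A\|$ into a supremum over infinite accepting paths of the weighted automaton associated with $A$, and then to match that supremum, path by path, against the infinite runs of $A$. Concretely, I would use the translation of $A=(S,s_0,F,T,r)$ into a weighted automaton $(\alpha,M,K)$ over $\A$ already underlying the two preceding theorems via $|A|=\alpha M^*\kappa$: the states are $S$, ordered so that $F=\{1,\dots,K\}$; $\alpha$ is the indicator vector of $s_0$; and $M_{s,s'}=\bigvee\{g_{(s,p,b,s')}\mid(s,p,b,s')\in T\}$, where $g_{(s,p,b,s')}\in\A$ is the atomic RTEF with rate $r(s)$, price $p$ and bound $b$. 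Then in the combinatorial transition system of $(\alpha,M,K)$ the transitions from $s$ to $s'$ are exactly the transitions of $A$ out of $s$ into $s'$, each carrying its own atomic RTEF as label, so by \cite[Thm.~20]{conf/dlt/EsikFL15} $\|A\|$ is the supremum in $\U$ of $\prod_{n\ge0}g_n$ over all infinite transition sequences $s_0=q_0\ttto{p_0}{b_0}q_1\ttto{p_1}{b_1}q_2\ttto{p_2}{b_2}\dotsm$ of $A$ that visit $F$ infinitely often, where $g_n=g_{(q_n,p_n,b_n,q_{n+1})}$. As suprema of $\Bool$-valued functions are taken pointwise and $\top=\ltrue$, this gives $\|A\|(x_0,t)=\top$ iff $\bigl(\prod_{n\ge0}g_n\bigr)(x_0,t)=\ltrue$ for at least one such sequence.

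The heart of the argument is then to match one infinite product against runs. Fixing such a sequence, with labels $g_0,g_1,\dotsc$ carrying rates $r(q_n)$, prices $p_n$ and bounds $b_n$, the definition of the infinite product in $(\E,\U)$ gives $\bigl(\prod_{n\ge0}g_n\bigr)(x_0,t)=\ltrue$ iff there are $\tau_0,\tau_1,\dotsc\in\Realni$ with $\sum_{n\ge0}\tau_n=t$ and $g_n(\tau_n)\circ\dotsm\circ g_0(\tau_0)(x_0)\ne\bot$ for every $n$. I would prove by induction on $n$ that $g_n(\tau_n)\circ\dotsm\circ g_0(\tau_0)(x_0)\ne\bot$ iff $(q_0,x_0,t)\leadsto(q_1,x_1,t-\tau_0)\leadsto\dotsm\leadsto(q_{n+1},x_{n+1},t-\tau_0-\dotsm-\tau_n)$ is a valid run of $A$, the composition in that case evaluating exactly to $x_{n+1}$; the induction step is a single unfolding of $\leadsto$, since by the definition of an atomic RTEF $g_n(\tau_n)(x_n)=x_n+r(q_n)\tau_n+p_n\ne\bot$ precisely when $x_n+r(q_n)\tau_n\ge b_n$, which is exactly the side condition for the step $(q_n,x_n,t')\leadsto(q_{n+1},x_{n+1},t'-\tau_n)$ along $q_n\ttto{p_n}{b_n}q_{n+1}$ (spend $\tau_n$ time units in $q_n$, then fire the transition). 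Hence $\bigl(\prod_{n\ge0}g_n\bigr)(x_0,t)=\ltrue$ iff $A$ has an infinite run out of $(s_0,x_0,t)$ following the transitions $q_0\ttto{p_0}{b_0}q_1\ttto{p_1}{b_1}\dotsm$, and, assembling this with the previous paragraph, $\|A\|(x_0,t)=\top$ iff $A$ has an infinite run out of $(s_0,x_0,t)$ along some transition sequence visiting $F$ infinitely often --- equivalently, since $F$ is finite, visiting some fixed $s\in F$ infinitely often, which is the claim.

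I expect the main obstacle to be the mismatch between the requirement $\sum_n\tau_n=t$ in the definition of the infinite product and an arbitrary infinite run of $A$, whose residual times $t-\tau_0-\dotsm-\tau_{n-1}$ need only be non-increasing and non-negative and may converge to some $L>0$ (or, for $t=\infty$, need not decrease at all), so that an infinite run a priori provides only $\sum_n\tau_n\le t$. This should be repairable by monotonicity: since every rate is non-negative and $\leadsto$ is monotone in the energy coordinate, one may transfer all leftover time into the first state without invalidating any transition or lowering any intermediate energy, so one may always assume $\sum_n\tau_n=t$. Carefully tracking the extended values $\bot$ and $\infty$ along the way --- in particular the stipulations $g(x,\infty)=\infty$ and $g(\infty,t)=\infty$ for atomic RTEFs, needed when $t=\infty$ --- is the remaining chore; beyond that the argument is a direct unravelling of definitions.
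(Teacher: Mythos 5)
Your proposal is correct and takes essentially the same route the paper intends: the paper states this theorem without a separate proof, treating it as immediate from the matrix representation of the RTEA over atomic RTEFs together with Thm.~20 of \cite{conf/dlt/EsikFL15} (identifying $\| A\|$ with the supremum over infinite accepting paths of the infinite products of their labels), which is exactly the reduction you carry out. Your extra step of padding the leftover time into the first state, justified by non-negative rates and monotonicity, correctly fills in the run-to-product direction that the paper leaves implicit.
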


\begin{thm}%
  \label{th:decidable}
  Problems~\ref{prb:reach},~\ref{prb:cover} and~\ref{prb:buchi} from
  Section~\ref{se:rtea} are decidable.
\end{thm}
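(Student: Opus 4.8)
\emph{Proof plan.} The strategy is to reduce each of the three problems, via the three theorems just proved, to an effective computation with real-time energy functions, and then to establish that every operation involved is computable. By those theorems, Problem~\ref{prb:reach} has a positive answer iff $|A|(x_0,t)>\bot$, Problem~\ref{prb:cover} iff $|A|(x_0,t)\ge y$, and Problem~\ref{prb:buchi} iff $\|A\|(x_0,t)=\top$. Since $(\E,\U)$ is a $^*$-continuous Kleene $\omega$-algebra (Proposition~\ref{pr:EUsckoa}), the matrix constructions of Section~\ref{se:weightedaut} apply, so $|A|=\alpha M^*\kappa$ with $M^*$ given by the recursion~\eqref{eq:mstar}, and $\|A\|$ is given by~\eqref{eq:momega}; both expressions use only the semiring operations $\vee$ and $\compop$, the star $^*$, and (for $\|A\|$) the derived $\omega$-operation, applied finitely often --- $O(|S|^3)$, resp.\ $O(|S|^4)$, times --- to the entries of $M$, which all lie in $\langle\A\rangle$. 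Hence it suffices to give a finite, effective representation of the elements of $\E$ and $\U$ on which $\vee$, $\compop$, $^*$ and $\omega$ are computable, and on which one can evaluate the represented function at a computable point $(x_0,t)$ and decide whether the result is $>\bot$, is $\ge y$, or equals $\top$.

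For $\E$, represent a function as a finite set of normal-form sequences of atomic RTEFs: by distributivity (Lemma~\ref{le:Fsemiring}) every element of $\E$ is a finite supremum of linear RTEFs, each of which has a normal form (Lemma~\ref{le:normalform}), and each atomic factor is coded by its rate, bound and price. Then $\vee$ is union of the coding sets; $\compop$ is computed by distributing over $\vee$, concatenating atomic sequences, and running the effective normalisation from the proof of Lemma~\ref{le:normalform} (a bounded sequence of local rewrites, each a $\max$ and a sum of the data); and $^*$ is computed from Lemma~\ref{le:locallyclosed}: if $f=\bigvee_{k=1}^m f_k$ in normal form then $f^*=\bigvee_{n=0}^m f^n$, the bound $m$ being read off the representation, so $f^*$ follows by finitely many uses of $\vee$ and $\compop$. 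Finally, evaluating a normal-form sequence $f_1,\dotsc,f_n$ with rates $r_1<\dotsm<r_n$ at $(x_0,t)$ amounts to maximising $x_0+\sum_j(r_j t_j+p_j)$ over $t_1+\dotsm+t_n=t$ subject to the bound constraints; because the rates are strictly increasing this optimum has a closed form (advance through the states as early as the bounds permit and spend any remaining time in the last, highest-rate state; cf.\ Example~\ref{ex:linear}), so one can decide feasibility --- hence whether the value is $\bot$ --- and, when feasible, read off the value as an affine form in $x_0,t$ with computable coefficients and compare it with $y$. Taking the supremum over the finitely many pieces yields $|A|(x_0,t)$, so Problems~\ref{prb:reach} and~\ref{prb:cover} are decidable.

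For $\U$ the representation is analogous: unfolding~\eqref{eq:momega} and using \Ax3 and \Ax4, any element of $\U$ that arises is a finite supremum of finitary infinite products, each determined by finitely many linear RTEFs together with an eventually periodic pattern, so deciding $\|A\|(x_0,t)=\top$ reduces to deciding whether there is an infinite, eventually periodic allocation of time keeping all finite prefixes feasible. The main obstacle is exactly this last point --- showing that the $\omega$-operation (equivalently, the infinite-product operation into $\V$) is effectively computable and that the resulting ``infinite feasibility'' predicate is decidable: this requires the periodicity analysis behind \Ax4 of $\prod_{n\ge0}f^{k_n}\compop g_n$, together with the observation, parallel to the star case via local closedness, that only boundedly many energy levels and time allocations need to be examined before the behaviour of a scalar $f^\omega$ stabilises. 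Granting this, and the routine fact that the comparisons of computable reals produced above are governed by the explicit piecewise-affine region structure and hence effective, $\|A\|(x_0,t)=\top$ is decidable, and with it Problem~\ref{prb:buchi}.
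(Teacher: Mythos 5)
Your treatment of the finite-behaviour part is sound and essentially the paper's: reduce Problems~\ref{prb:reach} and~\ref{prb:cover} to evaluating $|A|=\alpha M^*\kappa$, note that every entry encountered is an effectively represented element of $\E$ (you use sets of normal-form sequences, the paper uses the equivalent piecewise-linear corner-point representation), and use Lemma~\ref{le:locallyclosed} to compute $^*$ by finitely many $\vee$ and $\compop$. Up to that point the two arguments match.

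The Büchi part, however, has a genuine gap: at the decisive step you write ``Granting this'' for exactly the claim that carries the theorem, namely that the scalar $\omega$-operation (equivalently the infinite-feasibility predicate $f^\omega(x,t)=\ltrue$) is decidable. Your sketch --- ``periodicity analysis behind \Ax4'' and ``only boundedly many energy levels and time allocations need to be examined before $f^\omega$ stabilises'' --- is not an argument; energy levels and delays range over a continuum, so no analogue of local closedness bounds them for free, and it is not evident that witnesses can be taken eventually periodic (for finite $t$ the delays must sum to $t$, i.e.\ the run is Zeno, and a periodic allocation with positive period is impossible). The paper closes this gap with two concrete criteria and proofs. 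For $t<\infty$: writing $f$ piecewise as $f(y,u)=a(y,u)u+b(y,u)y+p(y,u)$ and letting $\alpha=\sup p\le 0$, one has $\lim_{u\to 0}f(y,u)=y+p(y,0)$, so since the delays $t_n\to 0$, every application of $f$ eventually costs at least $-\alpha$ energy; hence $f^\omega(x,t)=\ltrue$ iff the underlying RTEA has a path of price $0$ (in which case $t_0=t$, $t_n=0$ for $n\ge 1$ works), a decidable condition. For $t=\infty$: $f^\omega(x,\infty)=\ltrue$ iff there exists $t_0\in\Realni$ with $f(x,t_0)\ge x$; sufficiency is the periodic run $t_n=t_0$, but necessity needs the inequality $f(x',t_1)\le f(x,t_1)+x'-x$ for $x'\le x$ (lower input energy only disables behaviour) together with compactness of $\Realni$ to get $\alpha=\sup_{t_0}\bigl(f(x,t_0)-x\bigr)<0$, whence by induction $f(t_n)\circ\dotsm\circ f(t_0)(x)\le x+n\alpha\to\bot$. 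Both criteria are decidable by piecewise linearity, and via~\eqref{eq:momega} the matrix $\omega$ reduces to such scalar computations. Without supplying arguments of this kind, your proposal establishes Problems~\ref{prb:reach} and~\ref{prb:cover} but not Problem~\ref{prb:buchi}.
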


\begin{proof}
  Let $A$ be a RTEA, then $| A|\in \E$ and $\| A\|\in \U$.  Functions
  in $\E$ are PWL, hence they can be represented using the (finitely
  many) corner points of their regions of definition together with
  their values at these corner points.

  It is clear that computable atomic RTEFs are computable piecewise
  linear (\ie~all numbers in their finite representation are
  computable), and that compositions and suprema of computable
  piecewise linear functions are again computable piecewise linear.
  Using Lemma~\ref{le:locallyclosed}, we see that all operations to
  compute $| A|$ are computable.  This shows that the theorem for
  problems~\ref{prb:reach} and~\ref{prb:cover}.

  To show the claim for $\| A\|$, we note that because of piecewise
  linearity, the criteria in Lemma~\ref{le:fomega} are decidable;
  hence if $f\in \E$ is computable, then so is $F^\omega$.  Only
  $\vee$, $\compop$ and $^\omega$ operations are used to compute $\|
  A\|$, hence also problem~\ref{prb:buchi} is decidable.
\end{proof}

\section{Conclusion}

We have developed an algebraic methodology for deciding reachability
and B{\"u}chi problems on a class of weighted real-time models where
the weights represent energy or similar quantities.  The semantics of
such systems is modeled by real-time energy functions which map
initial energy of the system and available time to the maximal final
energy level.  We have shown that these real-time energy functions
form a $^*$-continuous Kleene $\omega$-algebra, which entails that
reachability and B{\"u}chi acceptance can be decided in a static way
which only involves manipulations of energy functions.

We have seen that the necessary manipulations of real-time energy
functions are computable, and in fact we conjecture that our method
leads to an exponential-time algorithm for deciding reachability and
B{\"u}chi acceptance in real-time energy automata.  This is due to the
fact that operations on real-time energy functions can be done in time
linear in the size of their representation, and the representation
size of compositions and suprema of real-time energy functions is a
linear function of the representation size of the operands.  In future
work, we plan to do a careful complexity analysis which could confirm
this result and to implement our algorithms to see how it fares in
practice.

This paper constitutes the first application of methods from Kleene
algebra to a timed-automata like formalism.  In future work, we plan
to lift some of the restrictions of the current model and extend it to
allow for time constraints and resets {\`a} la timed automata.  We
also plan to extend this work with action labels, which algebraically
means passing from the semiring of real-time energy functions to the
one of formal power series over these functions.  In applications,
this means that instead of asking for existence of accepting runs, one
is asking for controllability.

\bibliographystyle{alpha}
\bibliography{mybib}

\end{document}